\theoremstyle{definition}
\newtheorem{theo}{Theorem}
\theoremstyle{definition}
\newtheorem{case}{Case}[theo]
\newtheorem{definition}{Definition}
\begin{document}
\title{Circular-arc $H$-graphs: Ordering Characterizations
and Forbidden Patterns}
\author{ Indrajit Paul, Ashok Kumar Das\\Department of Pure Mathematics, University of Calcutta\\
Email Address -  
paulindrajit199822@gmail.com \&\\ ashokdas.cu@gmail.com}
\maketitle
\begin{abstract}
We introduce the class of circular-arc 
$H$-graphs, which generalizes circular-arc graphs, particularly circular-arc bigraphs. We investigate two types of ordering-based characterizations of circular-arc $r$-graphs. Finally, we provide forbidden obstructions for circular-arc $r$-graphs in terms of specific vertex orderings.
                                                            
\end{abstract}
\noindent {\bf Keywords:}
circular-arc $H$-graphs, circular-arc $r$-graphs, vertex ordering, generalized total-circular ordering, $r$-circular ordering , forbidden pattern
\section{Introduction}
A graph $G = (V, E)$ is called a \emph{circular-arc graph} if it is the intersection graph of 
circular arcs on a host circle. A bipartite graph (or simply, a \emph{bigraph}) 
$B = (X, Y, E)$ is a \emph{circular-arc bigraph} if there exists a family 
$\mathcal{A} = \{ A_v : v \in X \cup Y \}$ of circular arcs such that 
$uv \in E$ if and only if $A_u \cap A_v \neq \varnothing$, where $u \in X$ and $v \in Y$.  

The problem of characterizing circular-arc graphs was first studied by Klee~\cite{klee}. 
Circular-arc graphs and their subclasses, such as \emph{proper circular-arc graphs} 
(where no arc is properly contained in another in the representation) and 
\emph{Helly circular-arc graphs} (where the family of arcs satisfies the Helly property), 
have been extensively investigated by Tucker and others~\cite{Gavril, tucker1, tucker,tucker2}. 
More recently, Francis, Hell, and Stacho~\cite{fhs} presented an obstruction 
characterization and a certifying recognition algorithm for circular-arc graphs.  

In contrast, the bipartite version of circular-arc graphs, namely 
\emph{circular-arc bigraphs}, remains relatively less explored. 
Sen et al.~\cite{sdw} introduced circular-arc di/bigraphs and provided several 
characterizations of circular-arc bigraphs. Proper circular-arc bigraphs were studied 
by Das and Chakraborty~\cite{dc} and by Safe~\cite{safe}. 
Most of these characterizations rely on the adjacency matrix. 
In a recent work, Paul and Das~\cite{paul_das} characterized circular-arc bigraphs using 
vertex orderings and also provided forbidden-pattern characterizations with respect to 
specific vertex orderings.

An important direction, however, remains largely unexplored: the \emph{generalization of 
circular-arc bigraphs to graphs with more than two partite sets}. In this paper, we study 
and characterize these generalized classes in several ways. Recently, Müller and Rafiey~\cite{muller_rafiey} 
extended the concept of interval bigraphs by introducing \emph{interval $H$-graphs}. 
Motivated by their work, we introduce the analogous class of \emph{circular-arc $H$-graphs}, 
which generalizes circular-arc bigraphs.  

Formally, for a fixed graph $H$ with vertices $h_1, h_2, \ldots, h_r$, we say that an input 
graph $G$ with a vertex partition $V_1, V_2, \ldots, V_r$ is a \emph{circular-arc $H$-graph} 
if each vertex $v \in G$ can be represented by a circular arc $A_v$ on a host circle such that 
for $u \in V_i$ and $v \in V_j$, the vertices $u$ and $v$ are adjacent in $G$ if and only if 
$h_i h_j \in E(H)$ and the arcs $A_u$ and $A_v$ intersect. In particular, $G$ is called a 
\emph{circular-arc $r$-graph} when $H$ is the complete graph on $r$ vertices, and a 
\emph{circular-arc bigraph} when $r = 2$.
\section{Main Result}
In this section, we introduce several types of vertex orderings for $r$-partite graphs and examine their role in characterizing circular-arc $r$-graphs. We present characterizations of circular-arc $r$-graphs that are based on these orderings. In addition, we establish a characterization in terms of forbidden patterns, which may be regarded as one of the most intriguing characterizations of circular-arc $r$-graphs to date. 
We begin by defining the notion of a \textit{generalized total-circular ordering} of the vertices of a $r$-partite graph.

\begin{definition}[\textbf{Generalized total-circular ordering}]\label{d1}
Consider a $r$-partite graph $B=(X_1,X_2,...,X_r,E)$ of order $n$. Order the vertices of $B$ from $1$ to $n$ and arrange them on an $n$-hour clock, such that the $i^{\text{th}}$ vertex is on the $i^{\text{th}}$ hour marker. Assume that the vertex set $X=\bigcup_{i=1}^{r} X_i $ satisfyies the following conditions:
\begin{enumerate}[(a)]

    \item $x_ix_j\in E$ with $i>j$ (where $x_i\in X_{\alpha}$, $x_j\in X_{\beta}$, $\alpha\neq\beta$), then:
    \begin{itemize}
        \item either $x_ix_k\in E$, for all  $x_k\notin X_{\alpha}$, where $k\in \{ j+1, j+2,...,i-2,i-1\}$
        \item or, $x_lx_j\in E$, for all $x_l\notin X_{\beta}$, where $l\in \{i+1,i+2,...,n,1,2,...,j-1\}$,
       
    \end{itemize}
     \item $x_ix_j\in E$ with $i<j$ (where $x_i\in X_{\alpha}$, $x_j\in X_{\beta}$, $\alpha\neq\beta$), then:
    \begin{itemize}
        \item either $x_kx_j\in E$, for all  $x_k\notin X_{\beta}$, where $k\in \{ i+1, i+2,...,j-2,j-1\}$
        \item or, $x_ix_l\in E$, for all $x_l\notin X_{\alpha}$, where $l\in \{j+1,j+2,...,n,1,2,...,i-1\}$.
    \end{itemize}
\end{enumerate}
Then the vertex set $X=\bigcup_{i=1}^{r} X_i $ of $B$ is said to have a \textit{generalized total-circular ordering}.\\
 Using this generalized total-circular ordering of a $r$-partite graph, we will characterize circular-arc $r$-graphs in the following theorem:
\end{definition}
\begin{theo}\label{t3}
    \textit{An $r$-partite graph $B=(X_1,X_2,...X_r,E)$ is a circular-arc $r$-graph if and only if the vertex set $X=\bigcup_{i=1}^{r} X_i $ of $B$ admits a \textit{generalized total-circular ordering}}.
   
\end{theo}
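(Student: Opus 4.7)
I proceed by a two-way translation between the geometric arc-intersection picture and the combinatorial ordering on the $n$-hour clock. The guiding intuition is that for every edge $x_ix_j$ with $i>j$, the intersection $A_{x_i}\cap A_{x_j}$ lies either on the ``short side'' of the clock between hours $j$ and $i$ or on the ``long side'' that wraps through hour $n$ and hour $1$, and conditions~(a) and~(b) are precisely the combinatorial shadows of these two geometric alternatives.

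\emph{Necessity.} Given a circular-arc representation $\{A_v\}_{v\in X}$, perturb so that all arc endpoints are distinct. Pick a reference endpoint of each arc (say the counterclockwise endpoint) and list the vertices $v_1,\ldots,v_n$ in the cyclic order of these reference endpoints on the host circle; place them at hours $1,\ldots,n$. Given an edge $x_ix_j$ with $i>j$, $x_i\in X_\alpha$, $x_j\in X_\beta$, a short case analysis on the relative cyclic positions of the four endpoints of $A_{x_i}$ and $A_{x_j}$ shows that either $A_{x_i}$ covers the reference endpoint of every $A_{x_k}$ with $j<k<i$ (short-side intersection), or $A_{x_j}$ wraps around and covers the reference endpoint of every $A_{x_l}$ with $l\in\{i+1,\ldots,n,1,\ldots,j-1\}$ (long-side intersection). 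The first case gives $x_ix_k\in E$ for every such $x_k\notin X_\alpha$, which is the first bullet of~(a); the second case gives $x_lx_j\in E$ for every such $x_l\notin X_\beta$, which is the second bullet. Condition~(b) is obtained by the symmetric argument with the roles of $i$ and $j$ swapped.

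\emph{Sufficiency.} Conversely, given a generalized total-circular ordering, place equally-spaced points $p_1,\ldots,p_n$ at hours $1,\ldots,n$ on a host circle. For every edge, pick one alternative (first or second bullet) that is satisfied by the ordering. Define the arc $A_{x_i}$ to contain $p_i$ and to extend on the clockwise side far enough to cover all $p_m$ with $m<i$ such that the edge $x_ix_m$ has been assigned the first bullet of~(a), and to extend counterclockwise (possibly wrapping) far enough to cover all $p_m$ with $m>i$ such that $x_ix_m$ has been assigned the second bullet of~(b). Conditions~(a) and~(b) ensure the points to be covered on each side of $p_i$ form contiguous cyclic segments, so each $A_{x_i}$ is a well-defined arc. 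A triple-case check then verifies that two arcs $A_{x_i},A_{x_j}$ with $x_i,x_j$ in different parts intersect iff $x_ix_j\in E$; same-part overlaps do not create spurious edges because the host graph $K_r$ carries no edges within any $X_\alpha$.

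\emph{Main obstacle.} The crux is the sufficiency direction: each edge admits two possible geometric witnesses (via the two bullets of~(a)/(b)), and the extents of all arcs must be chosen in a globally consistent way so that every required intersection is realized while no spurious one is created across different parts. The cleanest route is likely an inductive construction that inserts vertices one at a time in cyclic order, using~(a) and~(b) at each step to certify that the partial representation extends admissibly; alternatively, one defines the extents explicitly from the ordering and then rules out spurious intersections by a careful triple analysis. Particular care is needed at the wrap-around boundary, where the two ``sides'' of the clock merge, and in unambiguously assigning each edge to the bullet it is witnessed by.
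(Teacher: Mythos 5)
Your necessity argument is essentially the paper's: order the vertices by one family of arc endpoints, note that an intersecting pair forces one arc to cover the reference endpoints of all vertices on one of the two sides of the pair, and read off the two bullets of conditions (a) and (b). Apart from pinning down the orientation convention (with counterclockwise reference points listed in clockwise order the two cases come out mirrored, so you must traverse in the opposite sense to land exactly on the stated bullets), this half is sound and matches the paper.

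The gap is in sufficiency, in two places. First, the construction you describe is not the one your witnesses license. For an edge $x_ix_m$ with $m<i$ assigned the first bullet of (a), the ordering guarantees adjacency of $x_i$ to every outside-part vertex with index strictly between $m$ and $i$, so $A_{x_i}$ may sweep from hour $i$ backwards (anticlockwise) to hour $m$; for an edge $x_ix_m$ with $m>i$ assigned the second bullet of (b), the guarantee is adjacency of $x_i$ to everything in $\{m+1,\dots,n,1,\dots,i-1\}$, so $A_{x_i}$ must reach hour $m$ by continuing backwards past hours $1$ and $n$. Both extensions lie on the \emph{same} side of $p_i$: the hour $p_i$ has to be the clockwise endpoint of $A_{x_i}$, not an interior point. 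Your arc instead extends on ``the clockwise side'' for the $m<i$ partners and ``counterclockwise (possibly wrapping)'' for the $m>i$ partners, i.e.\ on two opposite sides of $p_i$; under any consistent reading one of these two routes runs through the stretch of the circle about which the chosen bullet says nothing (for instance, reaching a bullet-(b)(2) partner $m>i$ through hours $i+1,\dots,m-1$). A vertex $x_k$ with $i<k<m$, lying in a part different from $x_i$'s and not adjacent to $x_i$ --- which the ordering does not exclude --- then has $p_k\in A_{x_i}\cap A_{x_k}$, a spurious intersection, so the construction as stated does not yield a valid representation. Second, the decisive step, that $A_{x_i}\cap A_{x_j}\neq\varnothing$ forces $x_ix_j\in E$, is only asserted (``a triple-case check then verifies\dots''), and your closing paragraph concedes that this verification and the global consistency of the witness assignment are left open; but that is exactly the hard content of this direction. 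The paper resolves both issues at once by setting $A_i=[m_i,i]$, where $m_i$ is obtained by walking anticlockwise from $v_i$ as long as every vertex met from outside $v_i$'s part is adjacent to $v_i$: then each bullet forces $j\in A_i$ or $i\in A_j$ for every edge, and conversely two intersecting arcs must contain one another's clockwise endpoint, which is its owner's hour, and every outside-part hour covered by an arc is adjacent to its owner by construction. Reworking your extents so that each arc is one-sided in this way (extending either maximally, as in the paper, or just to the furthest assigned partner) would close the gap.
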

\begin{proof}
    Necessity: Let $B=(X_1,X_2,...,X_r,E)$ be a circular-arc $r$-graph. Then there exist a circular arc $A_v$ corresponding to every vertex $v$ of the set $X=\bigcup_{i=1}^{r} X_i $. Such that $uv\in E$ if and only if $A_u\cap A_v\neq\emptyset$, where $u$ and $v$ belongs to different partite sets. Without loss of generality we consider that all the arcs having distinct end points. Now order the vertices of $B$ from $1$ to $n$ according to increasing order of clockwise end points (where $n$ is the order of the $r$-partite graph $B$). Let $v_1, v_2, v_3, ..., v_n$ be such an ordering. 
    \par Let $v_i$ be adjacent to $v_j$, where $v_i\in X_{\alpha}$, $v_j\in X_{\beta}$, and $\alpha\neq \beta$. Then, we have the following cases:
    \begin{case}\label{c1}
        ($i>j$)
    \end{case}
    \begin{itemize}

   \item Either, clockwise end point of $A_{v_i}$ (arc corresponding to $v_i$) lies within $A_{v_j}$ ( arc corresponding to $v_j$), in this case $A_{v_k}\cap A_{v_j}\neq \phi$, for all $k\in \{i+1, i+2,...,n,1,...,j-1\}$.\\
   Therefore $v_kv_j\in E$ for all $v_k\notin X_{\beta}$, where $k\in\{i+1,i+2,...,n,1,...,j-1\}$.
   \item Or, clockwise end point of $A_{v_j}$ lies within $A_{v_i}$, in this case $A_{v_l}\cap A_{v_i}\neq\emptyset$, for all $l\in\{ j+1, j+2,...,i-1\}$.\\
   Therefore $v_iv_l\in E$ for all $v_l\notin X_{\alpha}$, where $l\in\{ j+1, j+2,...,i-1\}$.
    \end{itemize}
    \begin{case}\label{c2}
        ($i<j$)
    \end{case}
    \begin{itemize}
    \item Either, clockwise end point of $A_{v_i}$ (arc corresponding to $v_i$) lies within $A_{v_j}$ ( arc corresponding to $v_j$), in this case $A_{v_k}\cap A_{v_j}\neq \emptyset$, for all $k\in \{i+1, i+2,...,j-1\}$.\\
   Therefore $v_kv_j\in E$ for all $v_k\notin X_{\beta}$, where $k\in \{i+1, i+2,...,j-1\}$.
   \item Or, clockwise end point of $A_{v_j}$ lies within $A_{v_i}$, in this case $A_{v_l}\cap A_{v_i}\neq\emptyset$, for all $l\in\{ j+1, j+2,...,n,1,...,i-1\}$.\\
   Therefore $v_iv_l\in E$ for all $v_l\notin X_{\alpha}$, where $l\in\{ j+1, j+2,...,n,1,...,i-1\}$.
    \end{itemize}
    Hence, the ordering $v_1,v_2,...,v_n$ of  vertices of the bigraph $B$ is a generalized total-circular ordering.
    \par\noindent{}Sufficiency: Let $B=(X_1,X_2,...,X_r,E)$ be an $r$-partite graph where the vertices are ordered as $v_1,v_2,...,v_n$, which is a generalized total-circular ordering.\\
    Now, we will construct a circular arc for each vertex of  the $r$-partite graph $B$. Let $k$ be the $k^{\text{th}}$ hour marker on an n-hour clock.\\
    If $v_i\in X_{\alpha}$, then draw a closed arc $A_i$ anticlockwise from $i$ to $m_i$, where $v_{m_i}$ is the last consecutive vertex from the set $X\setminus X_{\alpha}$ in the anticlockwise sequence $v_{i-1},v_{i-2},...,v_i$ that is adjacent to $v_i$ (i.e. $A_i=[m_i, i]$).\\
     If $v_j\in X_{\beta}$, then draw a closed arc $A_j$ anticlockwise from $j$ to $m_j$, where $v_{m_j}$ is the last consecutive vertex from the set  $X\setminus X_{\beta}$ in the anticlockwise sequence $v_{j-1},v_{j-2},...,v_j$ that is adjacent to $v_j$ (i.e. $A_j=[m_j, j]$).
     \par If $v_i$ is adjacent to $v_j$, for some $v_i\in X_{\alpha}$ and $v_j\in X_{\beta}$, then we have the following cases:
     \begin{case}\label{c3}
         ($i>j$)
     \end{case}
     Then, by \cref{d1} of generalized total-circular ordering:
     \begin{itemize}
          \item either $v_iv_k\in E$, for all $v_k\notin X_{\alpha}$, where ($k\in \{ j+1, j+2,...,i-2,i-1\}$), then $A_i$ contains $j$ and therefore $A_i\cap A_j\neq\emptyset$.
          
        \item or, $v_lv_j\in E$, for all $v_l\notin X_{\beta}$,where ($l\in \{i+1,i+2,...,n,1,2,...,j-1\}$), then $A_j$ contains $i$ and therefore $A_i\cap A_j\neq\phi$.
     \end{itemize}
     \begin{case}\label{c4}
         ($i<j$)
     \end{case}
     \begin{itemize}
        \item either $v_kv_j\in E$, for all $v_k\notin X_{\beta}$ ($k\in \{ i+1, i+2,...,j-2,j-1\}$), then $A_j$ contains $i$ and therefore $A_i\cap A_j\neq\emptyset$.
        \item or, $v_iv_l\in E$, for all $v_l\notin X_{\alpha}$ ($l\in \{j+1,j+2,...,n,1,2,...,i-1\}$), then $A_i$ contains $j$ and therefore $A_i\cap A_j\neq\emptyset$.
    \end{itemize}
    Therefore in any case, $v_i$ is adjacent to $v_j$ implies $A_i\cap A_j\neq\phi$.\\
    Again, let $A_i\cap A_j\neq\phi$, where $A_i$ is the circular arc corresponding to the vertex $v_i$ and $A_j$ is the circular arc corresponding to the vertex $v_j$. Then by the construction of the circular arcs it is clear that the vertex $v_i$ is adjacent to $v_j$.\\
    Thus $v_iv_j\in E$ if and only if $A_i\cap A_j\neq\phi$.
    Therefore $B=(X_1,X_2,...,X_r,E)$ is a circular-arc $r$-graph. 
    
\end{proof}

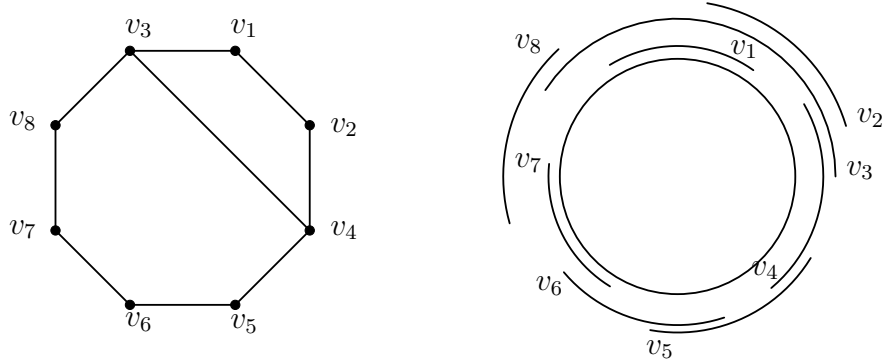
\begin{figure}[H]{\label{f1}}
    \centering
    \begin{tikzpicture}[line cap=round,line join=round,x=1.0cm,y=1.0cm,scale=.7]
\clip(-6.,1.) rectangle (11.,9.);
\fill[line width=.7pt,color=black,fill=white,fill opacity=0.10000000149011612] (-3.4,2.56) -- (-1.4,2.56) -- (0.014213562373094568,3.9742135623730945) -- (0.014213562373095012,5.974213562373094) -- (-1.4,7.388427124746189) -- (-3.4,7.3884271247461895) -- (-4.814213562373094,5.9742135623730945) -- (-4.814213562373094,3.9742135623730954) -- cycle;
\draw [line width=.7pt] (7.,5.) circle (2.23606797749979cm);
\draw [shift={(7.,5.)},line width=.7pt]  plot[domain=0.:2.562953072305964,variable=\t]({1.*3.*cos(\t r)+0.*3.*sin(\t r)},{0.*3.*cos(\t r)+1.*3.*sin(\t r)});
\draw [shift={(7.,5.)},line width=.7pt]  plot[domain=0.9514840476604486:2.1169628902286846,variable=\t]({1.*2.480725700274014*cos(\t r)+0.*2.480725700274014*sin(\t r)},{0.*2.480725700274014*cos(\t r)+1.*2.480725700274014*sin(\t r)});
\draw [shift={(7.,5.)},line width=.7pt]  plot[domain=-0.872357789965374:0.5201245875154695,variable=\t]({1.*2.768176294963888*cos(\t r)+0.*2.768176294963888*sin(\t r)},{0.*2.768176294963888*cos(\t r)+1.*2.768176294963888*sin(\t r)});
\draw [shift={(7.,5.)},line width=.7pt]  plot[domain=0.29145679447786715:1.4016951007935197,variable=\t]({1.*3.3408980828513757*cos(\t r)+0.*3.3408980828513757*sin(\t r)},{0.*3.3408980828513757*cos(\t r)+1.*3.3408980828513757*sin(\t r)});
\draw [shift={(7.,5.)},line width=.7pt]  plot[domain=4.536154282648014:5.73369576085061,variable=\t]({1.*2.96593998590666*cos(\t r)+0.*2.96593998590666*sin(\t r)},{0.*2.96593998590666*cos(\t r)+1.*2.96593998590666*sin(\t r)});
\draw [shift={(7.,5.)},line width=.7pt]  plot[domain=3.8417705860151123:5.0318822022723415,variable=\t]({1.*2.824535360019414*cos(\t r)+0.*2.824535360019414*sin(\t r)},{0.*2.824535360019414*cos(\t r)+1.*2.824535360019414*sin(\t r)});
\draw [shift={(7.,5.)},line width=.7pt]  plot[domain=3.0435473762959537:4.1623146077095,variable=\t]({1.*2.451774867315513*cos(\t r)+0.*2.451774867315513*sin(\t r)},{0.*2.451774867315513*cos(\t r)+1.*2.451774867315513*sin(\t r)});
\draw [shift={(7.,5.)},line width=.7pt]  plot[domain=2.3220197665761497:3.4141405272160372,variable=\t]({1.*3.3111931384321265*cos(\t r)+0.*3.3111931384321265*sin(\t r)},{0.*3.3111931384321265*cos(\t r)+1.*3.3111931384321265*sin(\t r)});
\draw [line width=.7pt,color=black] (-3.4,2.56)-- (-1.4,2.56);
\draw [line width=.7pt,color=black] (-1.4,2.56)-- (0.014213562373094568,3.9742135623730945);
\draw [line width=.7pt,color=black] (0.014213562373094568,3.9742135623730945)-- (0.014213562373095012,5.974213562373094);
\draw [line width=.7pt,color=black] (0.014213562373095012,5.974213562373094)-- (-1.4,7.388427124746189);
\draw [line width=.7pt,color=black] (-1.4,7.388427124746189)-- (-3.4,7.3884271247461895);
\draw [line width=.7pt,color=black] (-3.4,7.3884271247461895)-- (-4.814213562373094,5.9742135623730945);
\draw [line width=.7pt,color=black] (-4.814213562373094,5.9742135623730945)-- (-4.814213562373094,3.9742135623730954);
\draw [line width=.7pt,color=black] (-4.814213562373094,3.9742135623730954)-- (-3.4,2.56);
\draw [line width=.7pt] (-3.4,7.3884271247461895)-- (0.014213562373094568,3.9742135623730945);
\draw (-3.7,8.2) node[anchor=north west] {$v_3$};
\draw (-1.7,8.2) node[anchor=north west] {$v_1$};
\draw (0.2,6.32) node[anchor=north west] {$v_2$};
\draw (0.2,4.36) node[anchor=north west] {$v_4$};
\draw (-1.7,2.58) node[anchor=north west] {$v_5$};
\draw (-3.7,2.62) node[anchor=north west] {$v_6$};
\draw (-5.9,4.38) node[anchor=north west] {$v_7$};
\draw (-5.9,6.44) node[anchor=north west] {$v_8$};
\draw (7.8,7.8) node[anchor=north west] {$v_1$};
\draw (10.2,6.5) node[anchor=north west] {$v_2$};
\draw (10.0,5.5) node[anchor=north west] {$v_3$};
\draw (8.2,3.6) node[anchor=north west] {$v_4$};
\draw (6.2,2.1) node[anchor=north west] {$v_5$};
\draw (4.1,3.3) node[anchor=north west] {$v_6$};
\draw (3.7,5.64) node[anchor=north west] {$v_7$};
\draw (3.7,7.9) node[anchor=north west] {$v_8$};
\begin{scriptsize}
\draw [fill=black] (-3.4,2.56) circle (2.5pt);
\draw [fill=black] (-1.4,2.56) circle (2.5pt);
\draw [fill=black] (0.014213562373094568,3.9742135623730945) circle (2.5pt);
\draw [fill=black] (0.014213562373095012,5.974213562373094) circle (2.5pt);
\draw [fill=black] (-1.4,7.388427124746189) circle (2.5pt);
\draw [fill=black] (-3.4,7.3884271247461895) circle (2.5pt);
\draw [fill=black] (-4.814213562373094,5.9742135623730945) circle (2.5pt);
\draw [fill=black] (-4.814213562373094,3.9742135623730954) circle (2.5pt);
\end{scriptsize}
\end{tikzpicture}
    \caption{A circular-arc 2-graph with a generalized total-circular ordering of the vertices, where $X_1$= $\{ v_1, v_4, v_6, v_8\}$ , $X_2$=$\{v_2, v_3, v_5, v_7\}$.}
    \label{fig:enter-label-1}
\end{figure}
If we calculate the circular arcs of the given graph corresponding to the total-circular ordering of the vertices as shown in Figure 1, we get the following arcs: $A_{v_1}=[1,1]$, $A_{v_2}=[1,2]$, $A_{v_3}=[8,3]$, $A_{v_4}=[2,4]$, $A_{v_5}=[4,5]$, $A_{v_6}=[5,6]$, $A_{v_7}=[6,7]$, and  $A_{v_8}=[7,8]$.\vspace{.3cm}
\par Before introducing another vertex-ordering characterization, we first define the notion of \emph{almost consecutive ones} in the rows and columns of the adjacency matrix of $r$-partite graphs.  

\textbf{Definition 2.} Let $B = (X_1, X_2, \ldots, X_r, E)$ be an $r$-partite graph and let $A$ be the adjacency matrix of $B$. A row (say, the $i$-th row) is said to have \emph{almost consecutive ones} if, between any two ones in the row, whenever a zero appears at the position $(v_i, v_k)$, the vertices $v_i$ and $v_k$ belong to the same partite set of $B$.\\
\par
\textbf{Definition 3.} Let $B = (X_1, X_2, \ldots, X_r, E)$ be an $r$-partite graph and let $A$ be the adjacency matrix of $B$. A column (say, the $j$-th column) is said to have \emph{almost consecutive ones} if, between any two ones in the column, whenever a zero appears at the position $(v_l, v_j)$, the vertices $v_l$ and $v_j$ belong to the same partite set of $B$.

\par We now introduce another vertex ordering for $r$-partite graphs, referred to as $r$-circular ordering. Using this ordering, we characterize the class of circular-arc $r$-graphs.  

Let $B = (X_1, X_2, \dots, X_r, E)$ be a $r$-partite graph of order $n$. Let 
$v_1, v_2, \dots, v_n$ be an ordering of the vertex set  
$ X = \bigcup_{i=1}^r X_i $  
of $B$. Arrange these vertices on an $n$-hour clock so that the $i$-th vertex is placed at the $i$-th hour mark.  

Let $\textbf{M}$ be the adjacency matrix of $B$, where both rows and columns are indexed according to the increasing order of the vertex indices.  

Consider any row of $\textbf{M}$ (say, the $i$-th row). Define $\mathcal{W}_i$ as the set of 1’s in this row that appear almost consecutively, starting from column $s_i$, where $v_{s_i}$ is the first vertex encountered in the anticlockwise direction from $v_i$, and belongs to a different partite set than $v_i$ (note that if $v_i v_{s_i} \notin E$, then $\mathcal{W}_i = \varnothing$). The sequence continues leftward (wrapping around if necessary) until the last almost consecutive 1 is reached in this manner.  

Similarly, consider any column of $\textbf{M}$ (say, the $j$-th column) corresponding to vertex $v_j$. Define $\mathcal{Q}_j$ as the set of 1’s in this column that appear almost consecutively, starting from row $t_j$, where $v_{t_j}$ is the first vertex encountered in the anticlockwise direction from $v_j$, and belongs to a different partite set than $v_j$ (note that if $v_j v_{t_j} \notin E$, then $\mathcal{Q}_j = \varnothing$). The sequence continues upward (wrapping around if necessary) until the last almost consecutive 1 is reached in this manner.  

An ordering of the vertices of $B$ is called an \textit{$r$-circular ordering} if the sets $\mathcal{W}_i$ and $\mathcal{Q}_j$ together contain all the 1’s of the adjacency matrix $\textbf{M}$.  
\begin{theo}\label{t4}
	An $r$-partite graph $B = (X_1, X_2, \dots, X_r, E)$ is a circular-arc $r$-graph if and only if its vertex set $ X = \bigcup_{i=1}^r X_i $ admits an $r$-circular ordering.
\end{theo}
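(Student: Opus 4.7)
The plan is to derive Theorem~\ref{t4} directly from Theorem~\ref{t3} by establishing that, for any vertex ordering $v_1, v_2, \ldots, v_n$ of $X = \bigcup_{i=1}^{r} X_i$, the notion of a generalized total-circular ordering (Definition~\ref{d1}) coincides with that of an $r$-circular ordering. Once this equivalence is in hand, Theorem~\ref{t3} immediately yields the stated characterization, with no further geometric construction needed.

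To prove the equivalence, I would work edge by edge. Fix any edge $v_p v_q \in E$ with $p < q$, $v_p \in X_\alpha$ and $v_q \in X_\beta$, $\alpha \neq \beta$. Since the adjacency matrix $\mathbf{M}$ is symmetric, this edge contributes two 1's, at positions $(p, q)$ and $(q, p)$. Because $\mathcal{W}_i$ contains only 1's from row $i$ and $\mathcal{Q}_j$ only 1's from column $j$, the 1 at $(q, p)$ can lie only in $\mathcal{W}_q$ or $\mathcal{Q}_p$, while the 1 at $(p, q)$ can lie only in $\mathcal{W}_p$ or $\mathcal{Q}_q$. Thus the $r$-circular property reduces, for each edge, to the requirement that at least one witness from each of these two pairs exists.

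The crucial step is to match these four membership statements with the four alternatives of Definition~\ref{d1}. For instance, $\mathcal{W}_q$ starts at column $s_q$ (the first anticlockwise vertex not in $X_\beta$) and extends leftward almost consecutively; it reaches column $p$ exactly when every $v_k$ with $k \in \{p+1, \ldots, q-1\}$ and $v_k \notin X_\beta$ satisfies $v_q v_k \in E$, which is option~1 of~(a) with $i = q, j = p$. A symmetric unpacking shows that $(q, p) \in \mathcal{Q}_p$ is equivalent to option~2 of~(a) (the outer-arc filling of column $p$), $(p, q) \in \mathcal{Q}_q$ is equivalent to option~1 of~(b), and $(p, q) \in \mathcal{W}_p$ is equivalent to option~2 of~(b). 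Hence the two orderings impose exactly the same conditions on every edge, and the equivalence follows.

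The main difficulty I anticipate lies in the careful bookkeeping of the anticlockwise direction and the wrap-around across hour markers $n$ and $1$. One has to verify that the starting points $s_i$ and $t_j$ — defined as the first different-partite vertices met anticlockwise — produce inner-arc runs whose maximality corresponds precisely to option~1 alternatives of Definition~\ref{d1}, whereas the outer-arc runs (those wrapping through $n$ and $1$) correspond to the option~2 alternatives. Once these four correspondences are confirmed (they are symmetric variants of a single argument), invoking Theorem~\ref{t3} completes the proof: $B$ is a circular-arc $r$-graph iff it admits a generalized total-circular ordering iff it admits an $r$-circular ordering.
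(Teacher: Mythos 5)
Your proposal is correct, but it takes a genuinely different route from the paper. The paper proves Theorem~\ref{t4} from scratch with a model-based argument: for necessity it orders the vertices by clockwise endpoints of the arcs and checks that every $1$ of $\mathbf{M}$ falls in some $\mathcal{W}_i$ or $\mathcal{Q}_j$; for sufficiency it builds an explicit arc $A_i=[p_i,i]$ from the terminal position $p_i$ of each run $\mathcal{W}_i$ and verifies the intersection condition directly. You instead reduce everything to Theorem~\ref{t3} by showing that, for a fixed ordering, the $r$-circular condition and the generalized total-circular condition are the same edge-by-edge constraint. Your four correspondences do check out: for an edge $v_pv_q$ with $p<q$, $v_p\in X_\alpha$, $v_q\in X_\beta$, membership of the $1$ at $(q,p)$ in $\mathcal{W}_q$ (resp.\ $\mathcal{Q}_p$) is exactly the ``inner'' condition that $v_q$ is adjacent to every non-$X_\beta$ vertex in $\{p+1,\dots,q-1\}$ (resp.\ the ``wrap-around'' condition that $v_p$ is adjacent to every non-$X_\alpha$ vertex in $\{q+1,\dots,n,1,\dots,p-1\}$), and symmetrically for $(p,q)$ with $\mathcal{Q}_q$ and $\mathcal{W}_p$; here one uses that $s_q,t_q\in\{p,\dots,q-1\}$ because $v_p\notin X_\beta$, and that the clause ``$\mathcal{W}_i=\varnothing$ if $v_iv_{s_i}\notin E$'' is absorbed by the same condition. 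Since parts (a) and (b) of Definition~\ref{d1} impose the identical disjunction on each edge, both notions of ordering require precisely ``inner or wrap-around'' per edge, so they coincide and Theorem~\ref{t3} finishes the proof. What your route buys is economy and a clarifying observation the paper never states, namely that the two ordering characterizations are literally the same combinatorial condition in different clothing; what the paper's direct proof buys is independence from Theorem~\ref{t3} and an explicit circular-arc representation read off from the runs $\mathcal{W}_i$ (as in the example of Figure~3), which is of algorithmic interest. The only thing to be careful about, as you note yourself, is the wrap-around bookkeeping, which is routine once the four correspondences are written out.
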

\begin{proof}
	Necessity: Let $B = (X_1, X_2, \dots, X_r, E)$ be a circular-arc $r$-graph of order $n$. Then there exists a circular-arc model  $\mathcal{A}$=$\{A_v : v \in X = \bigcup_{i=1}^r X_i \}$  
	such that $uv \in E$ if and only if $A_u \cap A_v \neq \varnothing$, where $u$ and $v$ belong to different partite sets.  
	
	Without loss of generality, we may assume that the circular-arc model is chosen so that:  
	\begin{enumerate}
		\item none of its arcs coincides with the entire circle;
		\item all arcs are closed (i.e., each arc contains its endpoints);
		\item no two arcs share the same clockwise endpoint.
	\end{enumerate}
	
	Label the vertices of $B$ as $v_1, v_2, \ldots, v_n$ in the order of increasing clockwise endpoints of their corresponding arcs, and arrange the rows and columns of the adjacency matrix of $B$ according to this vertex ordering. We claim that, under this arrangement, the sets $\mathcal{W}_i$ and $\mathcal{Q}_j$ together contain all the 1’s of the adjacency matrix $\textbf{M}$.
	
	\par Let the \((i, j)\)-th position of the adjacency matrix $\textbf{M}$ contain a \(1\), which implies that $v_i$ is adjacent to $v_j$. Suppose $v_i\in X_{\alpha}$ and $v_j\in X_{\beta}$. If \(i > j\), then based on the ordering of the vertices of \(B\), one of the following must hold:
	\begin{itemize}
		\item  \(v_i v_k \in E\) for all $v_k\notin X_{\alpha}$, where \(k \in \{j + 1, j + 2, \dots, i - 1\}\). In this case, the \(1\) at position \((i, j)\) in the adjacency matrix of \(B\) must be contained in \(\mathcal{W}_i\).
		
		\item \(v_l v_j \in E\) for all $v_l\notin X_{\beta}$, where \(l \in \{i + 1, i + 2, \dots, n, 1, \dots, j - 1\}\). In this case, the \(1\) at position \((i,j)\) in the adjacency matrix of \(B\) must be contained in \(\mathcal{Q}_j\).
	\end{itemize}
	Similarly, if \(i < j\), a parallel argument shows that the \(1\) at position \((i, j)\) must be contained in either \(\mathcal{W}_i\) or \(\mathcal{Q}_j\).
	
	Thus, in every case, either \(\mathcal{W}_i\) or \(\mathcal{Q}_j\) must contain the \(1\) at position \((i, j)\) in the adjacency matrix of the $r$-partite graph \(B\). Therefore, the sets \(\mathcal{W}_i\) and $\mathcal{Q}_j$ \((1 \leq i,j \leq n)\) collectively contain all the \(1\)'s in the adjacency matrix $\textbf{M}$.

	\par Sufficiency: Consider an $r$-partite graph $B = (X_1, X_2, \dots, X_r, E)$, where the vertex set $X = \bigcup_{i=1}^r X_i$  
	is ordered as $v_1, v_2, \dots, v_n$. Place these vertices on an $n$-hour clock so that the $i$-th vertex is positioned at the $i$-th hour marker, and assume this ordering ensures that the sets $\mathcal{W}_i$ and $\mathcal{Q}_j$ together contain all the $1$’s of the adjacency matrix of $B$.  
	
	Let $\mathcal{W}_i$ ($1\leq i\leq n$) start from the position $(i, s_i)$ in the adjacency matrix of $B$ and continue leftward (wrapping around if necessary) until the last almost consecutive $1$ is encountered, terminating at position $(i, p_i)$. Draw an arc $A_i$ on the $n$-hour clock in the clockwise direction starting from $p_i$ and ending at $i$, and associate this arc with vertex $v_i$.  
	
	Suppose $v_i v_j \in E$. Then the position $(i, j)$ in the adjacency matrix contains a $1$. Consequently, either $W_i$ or $\mathcal{Q}_j$ contains this $1$.  
	\begin{itemize}
		\item If $\mathcal{W}_i$ contains this $1$, then the arc $A_i$ contains $j$.  
		\item If $\mathcal{Q}_j$ contains this $1$, then the arc $A_j$ contains $i$.  
	\end{itemize}  
	In either case, $A_i \cap A_j \neq \varnothing$.  
	
	Conversely, suppose $A_i \cap A_j \neq \varnothing$. Then either $A_i$ contains the clockwise endpoint of $A_j$, or $A_j$ contains the clockwise endpoint of $A_i$. This means that either $A_i$ contains $j$ or $A_j$ contains $i$.  
	\begin{itemize}
		\item If $A_i$ contains $j$, then $\mathcal{W}_i$ will include the position $(i, j)$ of the adjacency matrix, which must therefore be $1$, implying $v_i v_j \in E$.  
		\item Similarly, if $A_j$ contains $i$, then $\mathcal{Q}_j$ will include the position $(i, j)$ of the adjacency matrix, which must therefore be $1$, again implying $v_i v_j \in E$.  
	\end{itemize}  
	
	Hence, $v_i$ is adjacent to $v_j$ if and only if $A_i \cap A_j \neq \varnothing$, where $v_i$ and $v_j$ belong to different partite sets.  
	
	Therefore, $B = (X_1, X_2, \dots, X_r, E)$ is a circular-arc $r$-graph.
\end{proof}

\begin{figure}[H]
    \centering
   \begin{tikzpicture}[scale=1., every node/.style={circle, inner sep=1.5pt}]
    \node (v1) at (0,2) [fill=black] {};
    \node[draw=none, fill=none] at (0, 2.2) {$v_1$};
    \node (v4) at (1,2) [fill=black] {};
    \node[draw=none, fill=none] at (1, 2.2) {$v_4$};
    \node (v5) at (2,2) [fill=black] {};
    \node[draw=none, fill=none] at (2, 2.2) {$v_5$};
    \node (v7) at (3,2) [fill=black] {};
     \node[draw=none, fill=none] at (3, 2.2) {$v_7$};
    \node (v8) at (4,2) [fill=black] {};
    \node[draw=none, fill=none] at (4, 2.2) {$v_8$};
    \node (v10) at (5,2) [fill=black] {};
 \node[draw=none, fill=none] at (5, 2.2) {$v_{10}$};

    \node (v2) at (1,0) [fill=red] {};
     \node[draw=none, fill=none] at (1, -.2) {$v_2$};
    \node (v3) at (2,0) [fill=red] {};
     \node[draw=none, fill=none] at (2, -.2) {$v_3$};
    \node (v6) at (3,0) [fill=red] {};
    \node[draw=none, fill=none] at (3, -.2) {$v_6$};
    \node (v9) at (4,0) [fill=red] {};
    \node[draw=none, fill=none] at (4, -.2) {$v_9$};

\draw [line width=.5pt] (v1)--(v2);
\draw [line width=.5pt] (v1)--(v3);
\draw [line width=.5pt] (v1)--(v9);

\draw [line width=.5pt] (v2)--(v4);
\draw [line width=.5pt] (v3)--(v4);
\draw [line width=.5pt] (v3)--(v5);
\draw [line width=.5pt] (v3)--(v10);

\draw [line width=.5pt] (v4)--(v6);
\draw [line width=.5pt] (v5)--(v6);
\draw [line width=.5pt] (v6)--(v8);
\draw [line width=.5pt] (v6)--(v10);
\draw [line width=.5pt] (v7)--(v9);
\draw [line width=.5pt] (v8)--(v9);
\draw [line width=.5pt] (v9)--(v10);

\draw [rotate around={0.:(2.5,2.1)},line width=.5pt] (2.5,2.1) ellipse (3.5cm and 0.35cm);

\draw [rotate around={0.:(2.5,-.1)},line width=.5pt] (2.5,-.1) ellipse (2.5cm and 0.35cm);

\draw (5,0.1) node[anchor=north west,scale=1.5] {$\leftarrow$};
\draw (5.5,0.1) node[anchor=north west,scale=1.] {$X_1$};
\draw (6,2.3) node[anchor=north west,scale=1.5] {$\leftarrow$};
\draw (6.5,2.3) node[anchor=north west,scale=1.] {$X_2$};
\end{tikzpicture}
    \caption{A bipartite graph having an ordering of its vertices: $v_1$, $v_2$, $v_3$, $v_4$, $v_5$, $v_6$, $v_7$, $v_8$, $v_9$, $v_{10}$. It is a 2-circular ordering as shown in the next figure.}
    \label{fig:placeholder}
\end{figure}
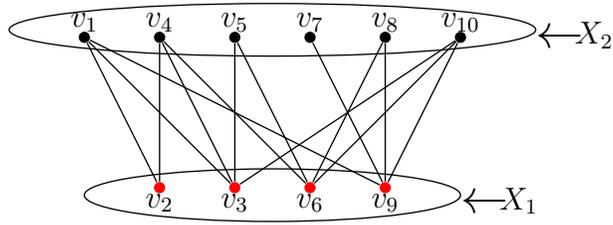

\begin{figure}[H]
    \centering

    \begin{tikzpicture}[line cap=round,line join=round,x=1cm,y=1cm,scale=1.5]
\clip(.5,.9) rectangle (11.2,11.5);
\draw [line width=1.pt] (1,1)-- (11.,1);
\draw [line width=1.pt] (1,1)-- (1.,11);
\draw [line width=1.pt] (1,11)-- (11.,11);
\draw [line width=1.pt] (11,1)-- (11.,11);
\draw [line width=1.pt] (2,1)-- (2.,11);
\draw [line width=1.pt] (3,1)-- (3.,11);
\draw [line width=1.pt] (4,1)-- (4.,11);
\draw [line width=1.pt] (5,1)-- (5.,11);
\draw [line width=1.pt] (6,1)-- (6.,11);
\draw [line width=1.pt] (7,1)-- (7.,11);
\draw [line width=1.pt] (8,1)-- (8.,11);
\draw [line width=1.pt] (9,1)-- (9.,11);
\draw [line width=1.pt] (10,1)-- (10.,11);
\draw [line width=1.pt] (1,2)-- (11.,2);
\draw [line width=1.pt] (1,3)-- (11.,3);
\draw [line width=1.pt] (1,4)-- (11.,4);
\draw [line width=1.pt] (1,5)-- (11.,5);
\draw [line width=1.pt] (1,6)-- (11.,6);
\draw [line width=1.pt] (1,7)-- (11.,7);
\draw [line width=1.pt] (1,8)-- (11.,8);
\draw [line width=1.pt] (1,9)-- (11.,9);
\draw [line width=1.pt] (1,10)-- (11.,10);

\draw (1.3,1.7) node[anchor=north west] {$0$};
\draw (2.3,1.7) node[anchor=north west] {$0$};
\draw (3.3,1.7) node[anchor=north west] {$1$};
\draw (4.3,1.7) node[anchor=north west] {$0$};
\draw (5.3,1.7) node[anchor=north west] {$0$};
\draw (6.3,1.7) node[anchor=north west] {$1$};
\draw (7.3,1.7) node[anchor=north west] {$0$};
\draw (8.3,1.7) node[anchor=north west] {$0$};
\draw (9.3,1.7) node[anchor=north west] {$1$};
\draw (10.3,1.7) node[anchor=north west] {$0$};

\draw (1.3,2.7) node[anchor=north west] {$1$};
\draw (2.3,2.7) node[anchor=north west] {$0$};
\draw (3.3,2.7) node[anchor=north west] {$0$};
\draw (4.3,2.7) node[anchor=north west] {$0$};
\draw (5.3,2.7) node[anchor=north west] {$0$};
\draw (6.3,2.7) node[anchor=north west] {$0$};
\draw (7.3,2.7) node[anchor=north west] {$1$};
\draw (8.3,2.7) node[anchor=north west] {$1$};
\draw (9.3,2.7) node[anchor=north west] {$0$};
\draw (10.3,2.7) node[anchor=north west] {$1$};

\draw (1.3,3.7) node[anchor=north west] {$0$};
\draw (2.3,3.7) node[anchor=north west] {$0$};
\draw (3.3,3.7) node[anchor=north west] {$0$};
\draw (4.3,3.7) node[anchor=north west] {$0$};
\draw (5.3,3.7) node[anchor=north west] {$0$};
\draw (6.3,3.7) node[anchor=north west] {$1$};
\draw (7.3,3.7) node[anchor=north west] {$0$};
\draw (8.3,3.7) node[anchor=north west] {$0$};
\draw (9.3,3.7) node[anchor=north west] {$1$};
\draw (10.3,3.7) node[anchor=north west] {$0$};

\draw (1.3,4.7) node[anchor=north west] {$0$};
\draw (2.3,4.7) node[anchor=north west] {$0$};
\draw (3.3,4.7) node[anchor=north west] {$0$};
\draw (4.3,4.7) node[anchor=north west] {$0$};
\draw (5.3,4.7) node[anchor=north west] {$0$};
\draw (6.3,4.7) node[anchor=north west] {$1$};
\draw (7.3,4.7) node[anchor=north west] {$0$};
\draw (8.3,4.7) node[anchor=north west] {$0$};
\draw (9.3,4.7) node[anchor=north west] {$1$};
\draw (10.3,4.7) node[anchor=north west] {$0$};

\draw (1.3,5.7) node[anchor=north west] {$0$};
\draw (2.3,5.7) node[anchor=north west] {$0$};
\draw (3.3,5.7) node[anchor=north west] {$0$};
\draw (4.3,5.7) node[anchor=north west] {$1$};
\draw (5.3,5.7) node[anchor=north west] {$1$};
\draw (6.3,5.7) node[anchor=north west] {$0$};
\draw (7.3,5.7) node[anchor=north west] {$1$};
\draw (8.3,5.7) node[anchor=north west] {$1$};
\draw (9.3,5.7) node[anchor=north west] {$0$};
\draw (10.3,5.7) node[anchor=north west] {$1$};

\draw (1.3,6.7) node[anchor=north west] {$0$};
\draw (2.3,6.7) node[anchor=north west] {$0$};
\draw (3.3,6.7) node[anchor=north west] {$1$};
\draw (4.3,6.7) node[anchor=north west] {$0$};
\draw (5.3,6.7) node[anchor=north west] {$0$};
\draw (6.3,6.7) node[anchor=north west] {$1$};
\draw (7.3,6.7) node[anchor=north west] {$0$};
\draw (8.3,6.7) node[anchor=north west] {$0$};
\draw (9.3,6.7) node[anchor=north west] {$0$};
\draw (10.3,6.7) node[anchor=north west] {$0$};

\draw (1.3,7.7) node[anchor=north west] {$0$};
\draw (2.3,7.7) node[anchor=north west] {$1$};
\draw (3.3,7.7) node[anchor=north west] {$1$};
\draw (4.3,7.7) node[anchor=north west] {$0$};
\draw (5.3,7.7) node[anchor=north west] {$0$};
\draw (6.3,7.7) node[anchor=north west] {$1$};
\draw (7.3,7.7) node[anchor=north west] {$0$};
\draw (8.3,7.7) node[anchor=north west] {$0$};
\draw (9.3,7.7) node[anchor=north west] {$0$};
\draw (10.3,7.7) node[anchor=north west] {$0$};

\draw (1.3,8.7) node[anchor=north west] {$1$};
\draw (2.3,8.7) node[anchor=north west] {$0$};
\draw (3.3,8.7) node[anchor=north west] {$0$};
\draw (4.3,8.7) node[anchor=north west] {$1$};
\draw (5.3,8.7) node[anchor=north west] {$1$};
\draw (6.3,8.7) node[anchor=north west] {$0$};
\draw (7.3,8.7) node[anchor=north west] {$0$};
\draw (8.3,8.7) node[anchor=north west] {$0$};
\draw (9.3,8.7) node[anchor=north west] {$0$};
\draw (10.3,8.7) node[anchor=north west] {$1$};

\draw (1.3,9.7) node[anchor=north west] {$1$};
\draw (2.3,9.7) node[anchor=north west] {$0$};
\draw (3.3,9.7) node[anchor=north west] {$0$};
\draw (4.3,9.7) node[anchor=north west] {$1$};
\draw (5.3,9.7) node[anchor=north west] {$0$};
\draw (6.3,9.7) node[anchor=north west] {$0$};
\draw (7.3,9.7) node[anchor=north west] {$0$};
\draw (8.3,9.7) node[anchor=north west] {$0$};
\draw (9.3,9.7) node[anchor=north west] {$0$};
\draw (10.3,9.7) node[anchor=north west] {$0$};

\draw (1.3,10.7) node[anchor=north west] {$0$};
\draw (2.3,10.7) node[anchor=north west] {$1$};
\draw (3.3,10.7) node[anchor=north west] {$1$};
\draw (4.3,10.7) node[anchor=north west] {$0$};
\draw (5.3,10.7) node[anchor=north west] {$0$};
\draw (6.3,10.7) node[anchor=north west] {$0$};
\draw (7.3,10.7) node[anchor=north west] {$0$};
\draw (8.3,10.7) node[anchor=north west] {$0$};
\draw (9.3,10.7) node[anchor=north west] {$1$};
\draw (10.3,10.7) node[anchor=north west] {$0$};

\draw (1.3,11.5) node[anchor=north west] {$v_1$};
\draw (2.3,11.5) node[anchor=north west] {$v_2$};
\draw (3.3,11.5) node[anchor=north west] {$v_3$};
\draw (4.3,11.5) node[anchor=north west] {$v_4$};
\draw (5.3,11.5) node[anchor=north west] {$v_5$};
\draw (6.3,11.5) node[anchor=north west] {$v_6$};
\draw (7.3,11.5) node[anchor=north west] {$v_7$};
\draw (8.3,11.5) node[anchor=north west] {$v_8$};
\draw (9.3,11.5) node[anchor=north west] {$v_9$};
\draw (10.3,11.5) node[anchor=north west] {$v_{10}$};

\draw (0.5,10.7) node[anchor=north west] {$v_1$};
\draw (0.5,9.7) node[anchor=north west] {$v_2$};
\draw (0.5,8.7) node[anchor=north west] {$v_3$};
\draw (0.5,7.7) node[anchor=north west] {$v_4$};
\draw (0.5,6.7) node[anchor=north west] {$v_5$};
\draw (0.5,5.7) node[anchor=north west] {$v_6$};
\draw (0.5,4.7) node[anchor=north west] {$v_7$};
\draw (0.5,3.7) node[anchor=north west] {$v_8$};
\draw (0.5,2.7) node[anchor=north west] {$v_9$};
\draw (0.5,1.7) node[anchor=north west] {$v_{10}$};

\draw [rotate around={-90.:(2.5,10.5)},line width=1.pt] (2.5,10.5) ellipse (0.5cm and 0.15cm);
\draw [rotate around={-90.:(3.5,11.5)},line width=1.pt] (3.5,11.5) ellipse (1.5cm and 0.15cm);
\draw [rotate around={0.:(9.5,10.5)},line width=1.pt] (9.5,10.5) ellipse (0.5cm and 0.15cm);
\draw [rotate around={0.:(1.5,9.5)},line width=1.pt] (1.5,9.5) ellipse (0.5cm and 0.15cm);
\draw [rotate around={90.:(4.5,9)},line width=1.pt] (4.5,9) ellipse (1cm and 0.2cm);
\draw [rotate around={90.:(5.5,8.5)},line width=1.pt] (5.5,8.5) ellipse (0.5cm and 0.15cm);
\draw [rotate around={0.:(.5,8.5)},line width=1.pt] (.5,8.5) ellipse (1.5cm and 0.2cm);
\draw [rotate around={180.:(11.5,8.5)},line width=1.pt] (11.5,8.5) ellipse (1.5cm and 0.2cm);
\draw [rotate around={0.:(3,7.5)},line width=1.pt] (3,7.5) ellipse (1.cm and 0.2cm);
\draw [rotate around={90.:(6.5,7.)},line width=1.pt] (6.5,7.) ellipse (1.cm and 0.2cm);
\draw [rotate around={0.:(3.5,6.5)},line width=1.pt] (3.5,6.5) ellipse (.5cm and 0.15cm);
\draw [rotate around={0.:(5,5.5)},line width=1.pt] (5,5.5) ellipse (1.cm and 0.2cm);
\draw [rotate around={90.:(7.5,5.5)},line width=1.pt] (7.5,5.5) ellipse (.5cm and 0.15cm);
\draw [rotate around={90.:(8.5,5.5)},line width=1.pt] (8.5,5.5) ellipse (.5cm and 0.15cm);
\draw [rotate around={90.:(10.5,5.5)},line width=1.pt] (10.5,5.5) ellipse (3.5cm and 0.25cm);
\draw [rotate around={0.:(6.5,1.5)},line width=1.pt] (6.5,1.5) ellipse (3.5cm and 0.25cm);
\draw [rotate around={0.:(8,2.5)},line width=1.pt] (8,2.5) ellipse (1cm and 0.2cm);
\draw [rotate around={90.:(9.5,4)},line width=1.pt] (9.5,4) ellipse (1cm and 0.2cm);
\draw [rotate around={0.:(6.5,4.5)},line width=1.pt] (6.5,4.5) ellipse (.5cm and 0.15cm);
\draw [rotate around={0.:(6.5,3.5)},line width=1.pt] (6.5,3.5) ellipse (.5cm and 0.2cm);
\draw [rotate around={90.:(3.5,.5)},line width=1.pt] (3.5,.5) ellipse (1.5cm and 0.25cm);
\draw [rotate around={90.:(1.5,2.5)},line width=1.pt] (1.5,2.5) ellipse (.5cm and 0.15cm);

\draw (9.2,11) node[anchor=north west,scale=.8] {$W_1$};
\draw (9.2,10.4) node[anchor=north west,scale=1.5] {$\leftarrow$};
\draw (2.,10.5) node[anchor=north west,scale=.8] {$Q_2$};
\draw (2.5,10.7) node[anchor=north west,scale=1.5] {$\uparrow$};
\draw (3.,10.5) node[anchor=north west,scale=.8] {$Q_3$};
\draw (3.5,10.7) node[anchor=north west,scale=1.5] {$\uparrow$};
\draw (1.2,10) node[anchor=north west,scale=.8] {$W_2$};
\draw (1.3,9.4) node[anchor=north west,scale=1.5] {$\leftarrow$};
\draw (4.,8.5) node[anchor=north west,scale=.8] {$Q_4$};
\draw (4.5,8.7) node[anchor=north west,scale=1.5] {$\uparrow$};
\draw (5.,8.5) node[anchor=north west,scale=.8] {$Q_4$};
\draw (5.5,8.7) node[anchor=north west,scale=1.5] {$\uparrow$};
\draw (1.2,9) node[anchor=north west,scale=.8] {$W_3$};
\draw (1.3,8.4) node[anchor=north west,scale=1.5] {$\leftarrow$};
\draw (3.2,8) node[anchor=north west,scale=.8] {$W_4$};
\draw (3.3,7.4) node[anchor=north west,scale=1.5] {$\leftarrow$};
\draw (3.2,7) node[anchor=north west,scale=.8] {$W_5$};
\draw (3.3,6.4) node[anchor=north west,scale=1.5] {$\leftarrow$};
\draw (6.,6.4) node[anchor=north west,scale=.8] {$Q_6$};
\draw (6.5,6.6) node[anchor=north west,scale=1.5] {$\uparrow$};
\draw (5.2,6) node[anchor=north west,scale=.8] {$W_6$};
\draw (5.3,5.4) node[anchor=north west,scale=1.5] {$\leftarrow$};
\draw (7.,5.4) node[anchor=north west,scale=.8] {$Q_7$};
\draw (7.5,5.6) node[anchor=north west,scale=1.5] {$\uparrow$};
\draw (8.,5.4) node[anchor=north west,scale=.8] {$Q_8$};
\draw (8.5,5.6) node[anchor=north west,scale=1.5] {$\uparrow$};
\draw (6.2,5) node[anchor=north west,scale=.8] {$W_7$};
\draw (6.3,4.4) node[anchor=north west,scale=1.5] {$\leftarrow$};
\draw (6.2,4) node[anchor=north west,scale=.8] {$W_8$};
\draw (6.3,3.4) node[anchor=north west,scale=1.5] {$\leftarrow$};
\draw (8.2,3) node[anchor=north west,scale=.8] {$W_9$};
\draw (8.3,2.4) node[anchor=north west,scale=1.5] {$\leftarrow$};
\draw (9.2,2) node[anchor=north west,scale=.8] {$W_{10}$};
\draw (9.3,1.4) node[anchor=north west,scale=1.5] {$\leftarrow$};
\draw (9.,3.4) node[anchor=north west,scale=.8] {$Q_9$};
\draw (9.5,3.6) node[anchor=north west,scale=1.5] {$\uparrow$};
\draw (10.,2.4) node[anchor=north west,scale=.8] {$Q_{10}$};
\draw (10.5,2.6) node[anchor=north west,scale=1.5] {$\uparrow$};
\draw (1.,2.4) node[anchor=north west,scale=.8] {$Q_{1}$};
\draw (1.5,2.6) node[anchor=north west,scale=1.5] {$\uparrow$};
\end{tikzpicture}
    \caption{The adjacency matrix of the bigraph in Figure 2, where the rows and columns are arranged according to the increasing order of their indices, and corresponding $W_i$'s and $Q_j$'s.}
    \label{}
\end{figure}

\noindent{}
Note that the zeros inside the elliptic regions of $W_{10}$ and $Q_{10}$, as shown in 
Figure~3, do not belong to the sets $W_{10}$ and $Q_{10}$. They simply indicate that 
the 1’s in $W_{10}$ and $Q_{10}$ are not strictly consecutive, but rather almost consecutive.\\

The circular-arc representation of the bigraph of Figure 2 is the following: $A_{v_1}=[9,1]$, $A_{v_2}=[1,2]$, $A_{v_3}=[10,3]$, $A_{v_4}=[2,4]$, $A_{v_5}=[3,5]$, $A_{v_6}=[4,6]$, $A_{v_7}=[6,7]$, $A_{v_8}=[6,8]$, $A_{v_9}=[7,9]$ and, $A_{v_{10}}=[3,10]$.\\

 \par Pavol Hell and Jing Huang \cite{hell} characterized interval bigraphs using forbidden patterns with respect to a specific vertex ordering in the following theorem.
\begin{theo}[\cite{hell}]
   \textit{ Let $H$ be a bipartite graph with bipartition $(X,Y)$. Then the following statements are equivalent}:
    \begin{itemize}
        \item\textit{ $H$ is an interval bigraph};
        \item\textit{ the vertices of $H$ can be ordered $v_1$, $v_2$, ..., $v_n$, so that there do not exist $a<b<c$ in the configurations in Figure 4. (Black vertices are in $X$, red vertices in $Y$, or conversely, and all edges not shown are absent.)}
        \item \textit{the vertices of $H$ can be ordered $v_1$, $v_2$, ..., $v_n$, so that there do not exist $a<b<c<d$ in the configurations in Figure 5}.
    \end{itemize}
\end{theo}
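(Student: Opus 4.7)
The plan is to establish the three equivalences cyclically, (1) $\Rightarrow$ (2) $\Rightarrow$ (3) $\Rightarrow$ (1), mirroring the strategy used to prove the earlier circular-arc characterizations in the paper but adapted to the linear (interval) setting.

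For (1) $\Rightarrow$ (2), I would begin with an interval representation $\{I_v : v \in X \cup Y\}$ and order the vertices by increasing left endpoints (the interval analogue of the clockwise-endpoint ordering employed in the proofs of \cref{t3} and \cref{t4}). Given any triple $a < b < c$ under this ordering, the geometric containment and intersection relations among $I_a$, $I_b$, $I_c$ force specific adjacency patterns: if $I_a \cap I_c \neq \varnothing$, then $I_b$, lying between them with respect to left endpoints, must also intersect at least one of $I_a$ or $I_c$ (provided the partite constraints permit an edge). This observation directly excludes each of the three-vertex obstructions displayed in Figure 4.

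For (2) $\Rightarrow$ (3), I expect a short combinatorial argument: each four-vertex forbidden configuration in Figure 5 should be shown to contain, as an induced subconfiguration on some triple of its vertices, one of the three-vertex patterns of Figure 4. Hence avoiding the three-vertex patterns automatically avoids the four-vertex ones. (If the patterns happen not to nest in this way, one would instead show the reverse implication (3) $\Rightarrow$ (2) independently and then close the loop differently.)

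The hard direction, and the main obstacle, is (3) $\Rightarrow$ (1): constructing intervals from an ordering that avoids the four-vertex patterns. The approach I would take is to assign to each vertex $v_i$ an interval $I_i = [\ell_i, i]$ whose right endpoint is $i$ and whose left endpoint $\ell_i$ is the index of the last consecutive neighbor of $v_i$ encountered by scanning leftward from $v_i$ in the opposite partite set, an interval-version of the almost-consecutive construction from \cref{t4}. The verification that $uv \in E$ if and only if $I_u \cap I_v \neq \varnothing$ then splits into two parts: one direction (edges produce intersections) is immediate from the definition of $\ell_i$; the other direction (spurious intersections never occur) is precisely where the absence of the four-vertex obstructions must be invoked. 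The challenge is a careful case analysis showing that any pair $I_i, I_j$ with $I_i \cap I_j \neq \varnothing$ but $v_i v_j \notin E$ would force a forbidden four-vertex configuration involving the vertices that witness $\ell_i$, $\ell_j$, $i$, and $j$.
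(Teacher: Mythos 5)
First, a point of comparison: the paper does not prove this statement at all --- it is Theorem~3, quoted from Hell and Huang \cite{hell} as background --- so your proposal can only be judged on its own merits, and there it has two genuine gaps. The first is in (1)$\Rightarrow$(2): ordering by \emph{left} endpoints does not avoid the pattern of Figure~4. To exclude that pattern you must show $I_b\cap I_c\neq\varnothing$ (so that $v_bv_c\in E$); your observation that $I_b$ meets ``at least one of $I_a$ or $I_c$'' only ever yields $I_b\cap I_a\neq\varnothing$, which is irrelevant because $v_a$ and $v_b$ lie in the same part. Concretely, $I_a=[0,10]$, $I_b=[1,2]$ (both in $X$) and $I_c=[3,4]$ (in $Y$) are ordered $a<b<c$ by left endpoints and realize exactly the forbidden triple. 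The correct choice is to order by \emph{right} endpoints (the analogue of the clockwise endpoints used in \cref{t3} and \cref{t4}): then $v_av_c\in E$ gives $\ell(I_c)\le r(I_a)\le r(I_b)$ and hence $I_b\cap I_c\neq\varnothing$. Your step (2)$\Rightarrow$(3) is fine as guessed: each configuration of Figure~5 contains the Figure~4 triple (on $(a,b,d)$, $(a,c,d)$, $(a,c,d)$ and $(a,b,c)$ respectively), so the same ordering works.

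The more serious gap is (3)$\Rightarrow$(1). An ordering avoiding the four-vertex patterns need not avoid the three-vertex pattern, and the construction $I_i=[\ell_i,i]$ run on that same ordering can simply fail to represent an edge, with no four-vertex obstruction available to produce a contradiction: take $v_1,v_2\in X$, $v_3\in Y$, $E=\{v_1v_3\}$, ordered $v_1<v_2<v_3$. No quadruple $i<j<k<\ell$ exists, so (3) holds vacuously, yet your construction gives $I_1=[1,1]$ and $I_3=[3,3]$, missing the edge $v_1v_3$ --- and the graph is of course an interval bigraph, so there is nothing to contradict; the construction itself is the wrong tool here. (You have also swapped the two verification directions: with this type of construction, ``intersection $\Rightarrow$ edge'' is the immediate part, while ``edge $\Rightarrow$ intersection'' is precisely where the forbidden patterns must be invoked; that is how the paper's own Theorems~5 and~6 argue, and in the circular setting it succeeds because the wrap-around guarantees the two witnesses $v_j$ and $v_\ell$ on both sides of the pair.) Passing from the four-point condition to an interval representation (equivalently, to a possibly different ordering satisfying the three-point condition) is exactly the nontrivial content of Hell and Huang's theorem and requires a genuinely different argument, not a rerun of the circular-arc construction on the given ordering.
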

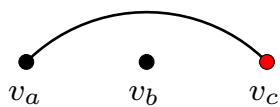
\begin{figure}[H]
    \centering
    \begin{tikzpicture}[line cap=round,line join=round,x=1.0cm,y=1.0cm,scale=.8]
\clip(1.,-1) rectangle (7.,3.);
\draw [shift={(4.,-1.)},line width=1.pt]  plot[domain=0.7853981633974483:2.356194490192345,variable=\t]({1.*2.8284271247461903*cos(\t r)+0.*2.8284271247461903*sin(\t r)},{0.*2.8284271247461903*cos(\t r)+1.*2.8284271247461903*sin(\t r)});
\begin{scriptsize}
\draw [fill=black] (2.,1.) circle (3.5pt);
\draw [fill=black] (4.,1.) circle (3.5pt);
\draw [fill=red] (6.,1.) circle (3.5pt);
\draw (1.5,.8) node[anchor=north west,scale=1.5] {$v_a$};
\draw (3.5,.8) node[anchor=north west,scale=1.5] {$v_b$};
\draw (5.5,.8) node[anchor=north west,scale=1.5] {$v_c$};
\end{scriptsize}
\end{tikzpicture}
    \caption{Forbidden pattern.}
    \label{fig:enter-label-3}
\end{figure}
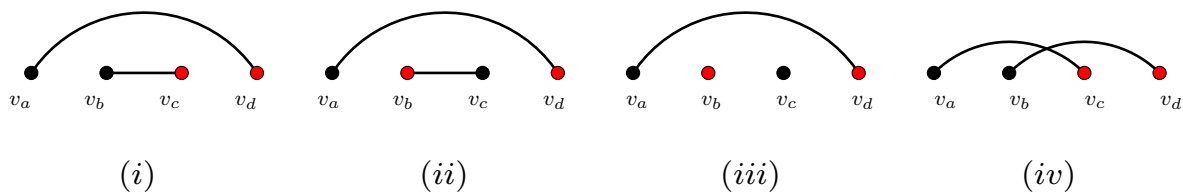
\begin{figure}[H]
    \centering
    \begin{tikzpicture}[line cap=round,line join=round,x=1.0cm,y=1.0cm, scale=1]
\clip(1.,0.) rectangle (19.,4.);
\draw [line width=1.pt] (3.,2.)-- (4.,2.);
\draw [shift={(3.5,1.)},line width=1.pt]  plot[domain=0.5880026035475675:2.5535900500422257,variable=\t]({1.*1.8027756377319948*cos(\t r)+0.*1.8027756377319948*sin(\t r)},{0.*1.8027756377319948*cos(\t r)+1.*1.8027756377319948*sin(\t r)});
\draw [line width=1.pt] (7.,2.)-- (8.,2.);
\draw [shift={(7.5,1.)},line width=1.pt]  plot[domain=0.5880026035475675:2.5535900500422257,variable=\t]({1.*1.8027756377319948*cos(\t r)+0.*1.8027756377319948*sin(\t r)},{0.*1.8027756377319948*cos(\t r)+1.*1.8027756377319948*sin(\t r)});
\draw [shift={(11.5,1.)},line width=1.pt]  plot[domain=0.5880026035475675:2.5535900500422257,variable=\t]({1.*1.8027756377319948*cos(\t r)+0.*1.8027756377319948*sin(\t r)},{0.*1.8027756377319948*cos(\t r)+1.*1.8027756377319948*sin(\t r)});
\draw [shift={(15.,1.)},line width=1.pt]  plot[domain=0.7853981633974483:2.356194490192345,variable=\t]({1.*1.4142135623730951*cos(\t r)+0.*1.4142135623730951*sin(\t r)},{0.*1.4142135623730951*cos(\t r)+1.*1.4142135623730951*sin(\t r)});
\draw [shift={(16.,1.)},line width=1.pt]  plot[domain=0.7853981633974483:2.356194490192345,variable=\t]({1.*1.4142135623730951*cos(\t r)+0.*1.4142135623730951*sin(\t r)},{0.*1.4142135623730951*cos(\t r)+1.*1.4142135623730951*sin(\t r)});
\begin{scriptsize}
\draw [fill=black] (2.,2.) circle (2.5pt);
\draw [fill=black] (3.,2.) circle (2.5pt);
\draw [fill=red] (4.,2.) circle (2.5pt);
\draw [fill=red] (5.,2.) circle (2.5pt);
\draw [fill=black] (6.,2.) circle (2.5pt);
\draw [fill=red] (7.,2.) circle (2.5pt);
\draw [fill=black] (8.,2.) circle (2.5pt);
\draw [fill=red] (9.,2.) circle (2.5pt);
\draw [fill=black] (10.,2.) circle (2.5pt);
\draw [fill=red] (11.,2.) circle (2.5pt);
\draw [fill=black] (12.,2.) circle (2.5pt);
\draw [fill=red] (13.,2.) circle (2.5pt);
\draw [fill=black] (14.,2.) circle (2.5pt);
\draw [fill=black] (15.,2.) circle (2.5pt);
\draw [fill=red] (16.,2.) circle (2.5pt);
\draw [fill=red] (17.,2.) circle (2.5pt);
\draw (1.6,1.8) node[anchor=north west,scale=1.] {$v_a$};
\draw (2.6,1.8) node[anchor=north west,scale=1.] {$v_b$};
\draw (3.6,1.8) node[anchor=north west,scale=1.] {$v_c$};
\draw (4.6,1.8) node[anchor=north west,scale=1.] {$v_d$};
\draw (3,1) node[anchor=north west,scale=1.5] {$(i)$};

\draw (5.7,1.8) node[anchor=north west,scale=1.] {$v_a$};
\draw (6.7,1.8) node[anchor=north west,scale=1.] {$v_b$};
\draw (7.7,1.8) node[anchor=north west,scale=1.] {$v_c$};
\draw (8.7,1.8) node[anchor=north west,scale=1.] {$v_d$};
\draw (7,1) node[anchor=north west,scale=1.5] {$(ii)$};

\draw (9.8,1.8) node[anchor=north west,scale=1.] {$v_a$};
\draw (10.8,1.8) node[anchor=north west,scale=1.] {$v_b$};
\draw (11.8,1.8) node[anchor=north west,scale=1.] {$v_c$};
\draw (12.8,1.8) node[anchor=north west,scale=1.] {$v_d$};
\draw (11,1) node[anchor=north west,scale=1.5] {$(iii)$};

\draw (13.9,1.8) node[anchor=north west,scale=1.] {$v_a$};
\draw (14.9,1.8) node[anchor=north west,scale=1.] {$v_b$};
\draw (15.9,1.8) node[anchor=north west,scale=1.] {$v_c$};
\draw (16.9,1.8) node[anchor=north west,scale=1.] {$v_d$};
\draw (15,1) node[anchor=north west,scale=1.5] {$(iv)$};

\end{scriptsize}
\end{tikzpicture}
    \caption{Forbidden patterns.}
    \label{fig:enter-label-4}
\end{figure}
Motivated by the above result of Hell and Huang \cite{hell}, Paul and Das \cite{paul_das}  provide a characterization of circular-arc bigraphs in terms of forbidden patterns.
\begin{theo}[\cite{paul_das}]
   \textit{ Let $G$ be a bipartite graph with bipartition $(X,Y)$. Then the following statements are equivalent}:
    \begin{itemize}
        \item \textit{$G$ is a circular-arc bigraph;} 
        \item \textit{The vertices of $G$ can be ordered $v_1$, $v_2$, $v_3$,..., $v_n$, so that there do not exist $i<j<k<l$  in the configurations in Figure 6}. (Black vertices are in $X$, red vertices in $Y$, or conversely, and all edges not shown are absent.)
    \end{itemize}
\end{theo}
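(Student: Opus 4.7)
The plan is to use the $2$-circular ordering characterization from Theorem~\ref{t4} (specialized to $r=2$) as the bridge between the geometric model and the forbidden-pattern list. This avoids reworking the circular-arc construction from scratch and reduces the problem to a combinatorial check on $4$-tuples in a linear/cyclic ordering, which is precisely the form the forbidden patterns take.

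\textbf{Necessity.} Assuming $G$ is a circular-arc bigraph, Theorem~\ref{t4} with $r=2$ supplies an ordering $v_1,v_2,\ldots,v_n$ which is $2$-circular: every $1$ in the adjacency matrix lies in some $\mathcal{W}_i$ or $\mathcal{Q}_j$. I would argue by contradiction: fix any one of the four patterns in Figure~6 and suppose it is realised by some indices $i<j<k<l$. Reading off the prescribed edges and non-edges in the pattern, I would locate a particular $1$ (of one of the pattern edges) whose row and column behaviour, together with the imposed non-edges, prevents it from sitting inside either an almost-consecutive leftward run in its row or an almost-consecutive upward run in its column. Concretely, the non-edge between the ``outermost'' pair in the pattern blocks the leftward extension required for $\mathcal{W}$, while the non-edge between the ``innermost'' pair blocks the upward extension required for $\mathcal{Q}$. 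Each of the four patterns (i)--(iv) follows the same recipe with a different assignment of which pair blocks which direction.

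\textbf{Sufficiency.} Conversely, suppose the vertices admit an ordering $v_1,\ldots,v_n$ avoiding all four forbidden patterns. I would show that this ordering is automatically a $2$-circular ordering, after which Theorem~\ref{t4} delivers the circular-arc model. Fix an arbitrary $1$ at position $(i,j)$ of the adjacency matrix; I want to show it is captured by $\mathcal{W}_i$ or by $\mathcal{Q}_j$. If it were missed by both, then along each run one can exhibit a blocking vertex from the opposite partite set that breaks the almost-consecutive property: one such blocker arises in row $i$ somewhere between column $j$ and column $s_i$, and a symmetric one arises in column $j$ somewhere between row $i$ and row $t_j$. These two blockers together with $v_i$ and $v_j$ form a $4$-tuple whose adjacency/non-adjacency profile is forced by the definition of ``blocker'', and one verifies that in every positional arrangement (accounting for cyclic wraparound) this $4$-tuple matches one of the configurations (i)--(iv), contradicting the hypothesis.

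The main obstacle will be the sufficiency direction: the case analysis must be organised so that the order $a<b<c<d$ of a forbidden pattern genuinely matches the relative positions of the two blockers together with $v_i$ and $v_j$ on the $n$-hour clock, while simultaneously keeping track of which partite set each of the four vertices belongs to (since the patterns in Figure~6 specify colours). A secondary but purely bookkeeping issue is wraparound: the ``leftward'' and ``upward'' extensions in the definition of $\mathcal{W}_i$ and $\mathcal{Q}_j$ are cyclic, so the index inequalities $a<b<c<d$ should be read on the cyclic ordering of the clock; this does not alter the structure of any pattern but does dictate which rotation of the $4$-tuple one writes down in each subcase.
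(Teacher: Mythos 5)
Your proposal is correct, but it takes a different route from the source: the statement you proved is quoted in this paper from \cite{paul_das} without proof, and both the cited paper's style and this paper's own proofs of the analogous results (the $3$-partite and $r$-partite versions, Theorems 2.5 and 2.6) argue directly with the circular-arc model — necessity by ordering vertices by clockwise endpoints of the arcs and checking the configurations against the two ways two arcs can meet, sufficiency by explicitly constructing the arcs $A_{v_i}=[m_i,i]$ and extracting two ``blocker'' vertices whose adjacency profile realizes a forbidden configuration. You instead factor everything through the $2$-circular ordering characterization (Theorem~\ref{t4}): necessity becomes the observation that a pattern edge ($v_iv_k$ in (i)--(ii), $v_jv_l$ in (iii)--(iv)) yields a $1$ that can lie in neither $\mathcal{W}$ (blocked by the outer non-edge, $v_iv_l$ resp.\ $v_iv_j$) nor $\mathcal{Q}$ (blocked by the inner non-edge, $v_jv_k$ resp.\ $v_kv_l$), and sufficiency becomes the converse blocker extraction: an uncovered $1$ at $(a,b)$ produces an opposite-class non-neighbour $v_c$ strictly between $a$ and $b$ and another, $v_d$, in the complementary cyclic interval, and the two subcases $d>b$ and $d<a$, split further by whether $v_cv_d\in E$, land exactly on patterns (i)/(ii) and (iii)/(iv) respectively (up to the permitted swap of the two colours). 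This is the same combinatorial content as the paper's Cases~7--8 in the sufficiency of Theorem~2.5, but your modular route avoids redoing the arc construction, which is already encapsulated in Theorem~\ref{t4}; what you lose is the explicit circular-arc model that the direct argument produces. One small correction to your bookkeeping remark: the inequalities $i<j<k<l$ in Figure~6 are genuinely linear, not cyclic — the wraparound position of the second blocker is what switches you between the pattern pairs (i)/(ii) and (iii)/(iv), so no cyclic reading of the indices is needed.
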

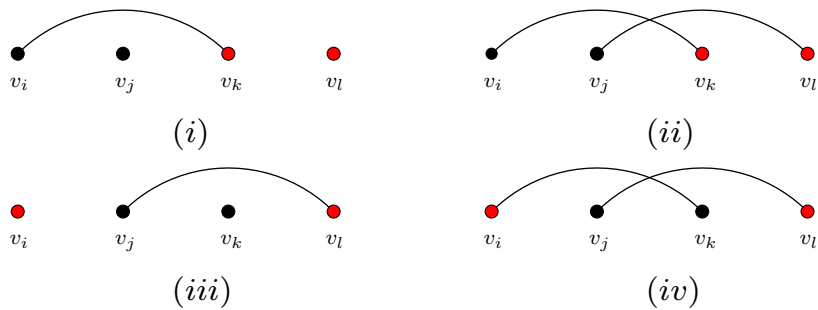
\begin{figure}[H]
    \centering
    \begin{tikzpicture}[line cap=round,line join=round,x=1.0cm,y=1.0cm,scale=.7]
\clip(1.,-1) rectangle (18.,6.);
\draw [shift={(4.,2.)},line width=.5pt]  plot[domain=0.7853981633974483:2.356194490192345,variable=\t]({1.*2.8284271247461903*cos(\t r)+0.*2.8284271247461903*sin(\t r)},{0.*2.8284271247461903*cos(\t r)+1.*2.8284271247461903*sin(\t r)});
\draw [shift={(13.,2.)},line width=.5pt]  plot[domain=0.7853981633974483:2.356194490192345,variable=\t]({1.*2.8284271247461903*cos(\t r)+0.*2.8284271247461903*sin(\t r)},{0.*2.8284271247461903*cos(\t r)+1.*2.8284271247461903*sin(\t r)});
\draw [shift={(15.,2.)},line width=.5pt]  plot[domain=0.7853981633974483:2.356194490192345,variable=\t]({1.*2.8284271247461903*cos(\t r)+0.*2.8284271247461903*sin(\t r)},{0.*2.8284271247461903*cos(\t r)+1.*2.8284271247461903*sin(\t r)});
\draw [shift={(6.,-1.)},line width=.5pt]  plot[domain=0.7853981633974483:2.356194490192345,variable=\t]({1.*2.8284271247461903*cos(\t r)+0.*2.8284271247461903*sin(\t r)},{0.*2.8284271247461903*cos(\t r)+1.*2.8284271247461903*sin(\t r)});
\draw [shift={(13.,-1.)},line width=.5pt]  plot[domain=0.7853981633974483:2.356194490192345,variable=\t]({1.*2.8284271247461903*cos(\t r)+0.*2.8284271247461903*sin(\t r)},{0.*2.8284271247461903*cos(\t r)+1.*2.8284271247461903*sin(\t r)});
\draw [shift={(15.,-1.)},line width=.5pt]  plot[domain=0.7853981633974483:2.356194490192345,variable=\t]({1.*2.8284271247461903*cos(\t r)+0.*2.8284271247461903*sin(\t r)},{0.*2.8284271247461903*cos(\t r)+1.*2.8284271247461903*sin(\t r)});
\begin{scriptsize}
\draw [fill=black] (2.,4.) circle (3.5pt);
\draw (1.7,3.7) node[anchor=north west,scale=1] {$v_i$};
\draw [fill=black] (4.,4.) circle (3.5pt);
\draw (3.7,3.7) node[anchor=north west,scale=1] {$v_j$};
\draw [fill=red] (6.,4.) circle (3.5pt);
\draw (5.7,3.7) node[anchor=north west,scale=1] {$v_k$};
\draw [fill=red] (8.,4.) circle (3.5pt);
\draw (7.7,3.7) node[anchor=north west,scale=1] {$v_l$};
\draw (4.7,3) node[anchor=north west,scale=1.5] {$(i)$};
\draw [fill=black] (11.,4.) circle (3pt);
\draw (10.7,3.7) node[anchor=north west,scale=1] {$v_i$};
\draw [fill=black] (13.,4.) circle (3.5pt);
\draw (12.7,3.7) node[anchor=north west,scale=1] {$v_j$};
\draw [fill=red] (15.,4.) circle (3.5pt);
\draw (14.7,3.7) node[anchor=north west,scale=1] {$v_k$};
\draw [fill=red] (17.,4.) circle (3.5pt);
\draw (16.7,3.7) node[anchor=north west,scale=1] {$v_l$};
\draw [fill=red] (2.,1.) circle (3.5pt);
\draw (13.7,3) node[anchor=north west,scale=1.5] {$
(ii)$};
\draw (1.7,.7) node[anchor=north west,scale=1] {$v_i$};
\draw [fill=black] (4.,1.) circle (3.5pt);
\draw (3.7,.7) node[anchor=north west,scale=1] {$v_j$};
\draw [fill=black] (6.,1.) circle (3.5pt);
\draw (5.7,.7) node[anchor=north west,scale=1] {$v_k$};
\draw [fill=red] (8.,1.) circle (3.5pt);
\draw (7.7,.7) node[anchor=north west,scale=1] {$v_l$};
\draw [fill=red] (11.,1.) circle (3.5pt);
\draw (4.7,0) node[anchor=north west,scale=1.5] {$
(iii)$};
\draw (10.7,.7) node[anchor=north west,scale=1] {$v_i$};
\draw [fill=black] (13.,1.) circle (3.5pt);
\draw (12.7,.7) node[anchor=north west,scale=1] {$v_j$};
\draw [fill=black] (15.,1.) circle (3.5pt);
\draw (14.7,.7) node[anchor=north west,scale=1] {$v_k$};
\draw [fill=red] (17.,1.) circle (3.5pt);
\draw (16.7,.7) node[anchor=north west,scale=1] {$v_l$};
\draw (13.7,0) node[anchor=north west,scale=1.5] {$
(iv)$};
\end{scriptsize}
\end{tikzpicture}
    \caption{Forbidden patterns.}
    \label{fig:enter-label-5}
\end{figure}
It is a natural and general question whether the classes of circular-arc $r$-graphs ($r \geq 3$) can be characterized by a finite collection of forbidden patterns with respect to some specific ordering of their vertices. In the following theorems, we address and resolve this question.
\begin{theo}
Let $B = (X_1, X_2, X_3, E)$ be a $3$-partite graph. Then the following 
statements are equivalent:
\begin{enumerate}
    \item $B$ is a circular-arc $3$-graph;
    \item The vertices of $B$ can be ordered $v_1, v_2, \ldots, v_n$ such that no indices 
    $i < j < k < \ell$ occur in the configurations shown in Figure~7. 
    (Here, different colors indicate that vertices of different colors belong to 
    different partite sets; moreover, all edges not explicitly drawn are absent.)
\end{enumerate}
\end{theo}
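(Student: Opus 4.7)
The plan is to reduce the equivalence to the ordering characterization provided by Theorem~\ref{t3}, so that the real content of the proof becomes the translation between ``avoids every configuration of Figure~7'' and ``is a generalized total-circular ordering'' in the sense of Definition~\ref{d1}.

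For the direction $(1)\Rightarrow(2)$, I would start with a circular-arc $3$-graph $B$ and apply Theorem~\ref{t3} to obtain a generalized total-circular ordering $v_1,\ldots,v_n$. Suppose, for contradiction, that some quadruple $i<j<k<\ell$ realizes one of the configurations of Figure~7. For each edge actually present in that configuration I would invoke the corresponding dichotomy of Definition~\ref{d1}(a) or (b): each present edge forces either a ``forward'' completion (every non-$X_\alpha$ vertex between the endpoints is adjacent to one end) or a ``backward/wrap-around'' completion (every non-$X_\beta$ vertex outside the interval is adjacent to the other end). Exhausting the choices for the edges of the configuration, I expect each combination to force at least one of the ``absent'' edges of the pattern to be present, contradicting the stipulation that all undrawn edges are missing.

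For $(2)\Rightarrow(1)$, I assume the ordering $v_1,\ldots,v_n$ avoids every pattern of Figure~7 and verify that it satisfies Definition~\ref{d1}. Given an edge $v_iv_j\in E$ with $i>j$, $v_i\in X_\alpha$ and $v_j\in X_\beta$, I assume both alternatives of Definition~\ref{d1}(a) fail; then there exist a witness $v_k\notin X_\alpha$ with $k\in\{j+1,\ldots,i-1\}$ and $v_iv_k\notin E$, and a witness $v_\ell\notin X_\beta$ with $\ell\in\{i+1,\ldots,n,1,\ldots,j-1\}$ and $v_\ell v_j\notin E$. Organizing the analysis by which of the partite sets $X_\alpha$, $X_\beta$, $X_\gamma$ contains $v_k$ and $v_\ell$, and by the relative cyclic positions of the four vertices, I would show that after cyclically rotating the ordering so that the four indices are increasing, the quadruple $\{v_i,v_j,v_k,v_\ell\}$ realizes one of the configurations of Figure~7, a contradiction. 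The symmetric case $i<j$ is handled analogously using Definition~\ref{d1}(b), and Theorem~\ref{t3} then yields a circular-arc model for $B$.

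The main obstacle is the case analysis in the sufficiency direction. Moving from the bipartite setting of the Paul--Das theorem (four patterns on four vertices, two colors) to the $3$-partite setting expands the number of color assignments on the four witness vertices substantially, and the forbidden list in Figure~7 must be rich enough to capture every way the dichotomies of Definition~\ref{d1} can fail, yet not redundant. The bulk of the proof will therefore consist of a careful bookkeeping that pairs each possible failure mode of the generalized total-circular ordering with a specific configuration in Figure~7, and a symmetric verification that each listed configuration genuinely obstructs any circular-arc $3$-representation by running the necessity argument backwards through Theorem~\ref{t3}. The bridge via Theorem~\ref{t3} itself is structurally routine; the delicate part is confirming that the list in Figure~7 is both complete and minimal for the $3$-partite case.
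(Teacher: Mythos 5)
Your proposal is correct in outline, but it takes a genuinely different route from the paper: the paper never passes through Theorem~\ref{t3}. For necessity the paper orders the vertices by the clockwise endpoints of the arcs and argues geometrically (splitting on which arc's clockwise endpoint lies inside the other, and quoting the bipartite Paul--Das theorem when the quadruple meets only two partite sets); for sufficiency it re-runs the explicit arc construction $A_{v_i}=[m_i,i]$ and shows that if $v_iv_k\in E$ but $A_{v_i}\cap A_{v_k}=\varnothing$, then the two blocking witnesses ($v_j\notin X_\beta$ between $i$ and $k$ non-adjacent to $v_k$, and $v_l\notin X_\alpha$ in the wrap-around region non-adjacent to $v_i$), together with the optional edges among the four vertices, realize one of the configurations of Figure~7 (eight cases according to the side of the wrap and the partite sets of $v_j,v_l$). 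Your reduction through Theorem~\ref{t3} is legitimate and avoids duplicating the arc construction: the witnesses to the failure of Definition~\ref{d1} are exactly the paper's blocking vertices, so your $(2)\Rightarrow(1)$ case analysis is isomorphic to the paper's, and your $(1)\Rightarrow(2)$ becomes a purely combinatorial verification that in each of the $28$ patterns the dichotomy of Definition~\ref{d1}, applied to the long present edge ($v_iv_k$ or $v_jv_l$), forces one of the absent edges --- replacing the paper's geometric argument; what the paper's direct route buys in exchange is an explicit model and independence from Theorem~\ref{t3}. Two cautions in executing your plan: do not literally ``cyclically rotate the ordering'' in the wrap-around case, since rotation need not preserve pattern-avoidance; as in the paper, simply relabel the four witnesses in increasing order of their original indices (the outside witness lies either entirely after the later endpoint or entirely before the earlier one, which is precisely why Figure~7 contains the mirrored families such as (v)--(viii) and (xxv)--(xxviii)). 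Also, minimality of the pattern list is neither needed nor proved by the paper, so you can drop that goal; the remaining bulk of the work is the exhaustive color/edge bookkeeping, which your outline correctly identifies but does not carry out.
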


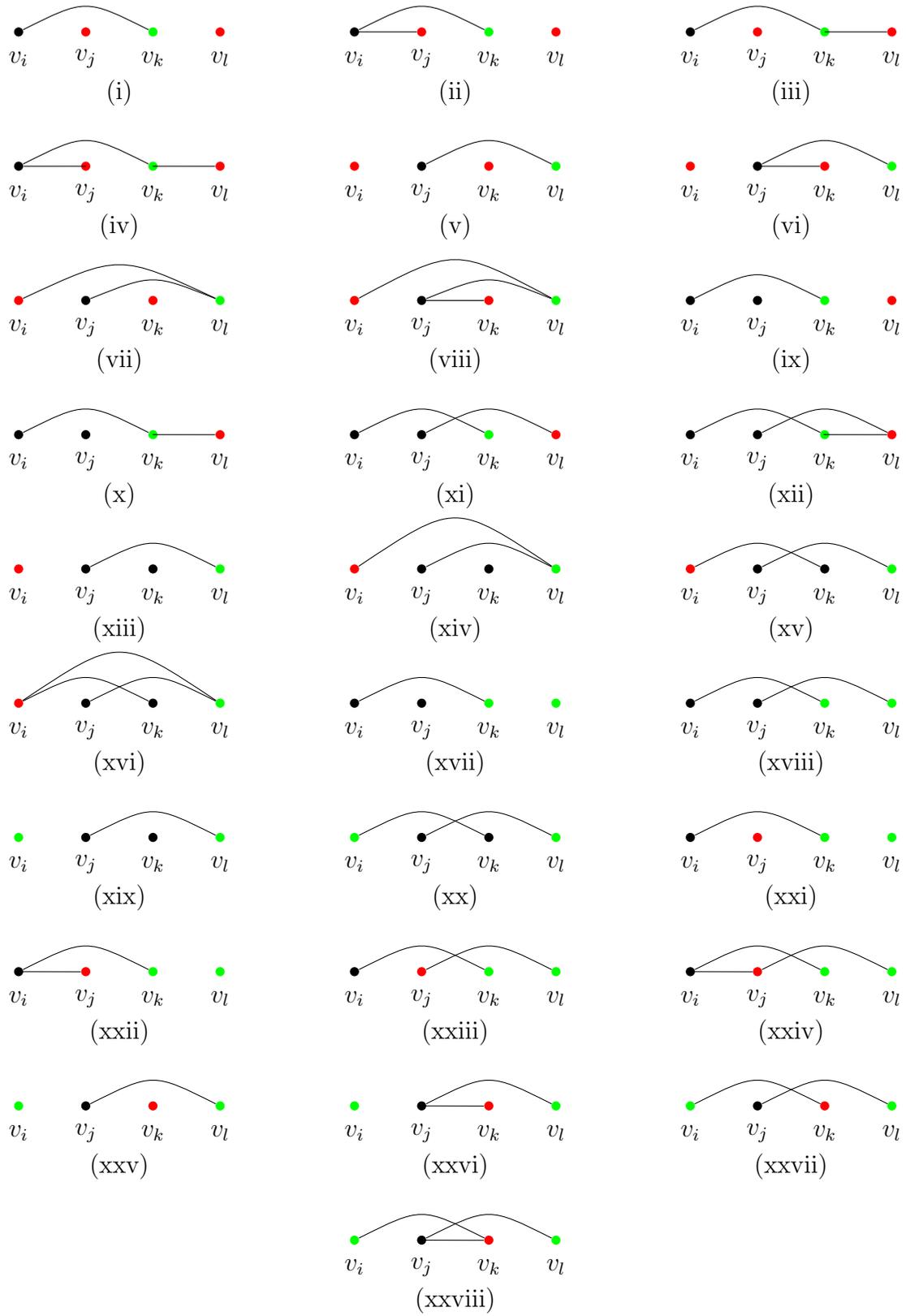
\begin{figure}[H]
\centering

\begin{tikzpicture}[scale=1.1, every node/.style={circle, inner sep=1.5pt}]

\node (vi1) at (0, 0) [fill=black] {};
\node (vj1) at (1, 0) [fill=red] {};
\node (vk1) at (2, 0) [fill=green] {};
\node (vl1) at (3, 0) [fill=red] {};

\draw (vi1) .. controls (1, .5) .. (vk1);

\node[draw=none, fill=none] at (0, -0.4) {$v_i$};
\node[draw=none, fill=none] at (1, -0.4) {$v_j$};
\node[draw=none, fill=none] at (2, -0.4) {$v_k$};
\node[draw=none, fill=none] at (3, -0.4) {$v_l$};

\node[draw=none, fill=none] at (1.5, -0.9) {(i)};
\node (vi2) at (5, 0) [fill=black] {};
\node (vj2) at (6, 0) [fill=red] {};
\node (vk2) at (7, 0) [fill=green] {};
\node (vl2) at (8, 0) [fill=red] {};

\draw (vi2) .. controls (6, .5) .. (vk2);
\draw (vi2) .. controls (5, 0 ).. (vj2);
\node[draw=none, fill=none] at (5, -0.4) {$v_i$};
\node[draw=none, fill=none] at (6, -0.4) {$v_j$};
\node[draw=none, fill=none] at (7, -0.4) {$v_k$};
\node[draw=none, fill=none] at (8, -0.4) {$v_l$};

\node[draw=none, fill=none] at (6.5, -0.9) {(ii)};

\node (vi3) at (10, 0) [fill=black] {};
\node (vj3) at (11, 0) [fill=red] {};
\node (vk3) at (12, 0) [fill=green] {};
\node (vl3) at (13, 0) [fill=red] {};

\draw (vi3) .. controls (11, .5) .. (vk3);
\draw (vk3) .. controls (12,0 ).. (vl3);
\node[draw=none, fill=none] at (10, -0.4) {$v_i$};
\node[draw=none, fill=none] at (11, -0.4) {$v_j$};
\node[draw=none, fill=none] at (12, -0.4) {$v_k$};
\node[draw=none, fill=none] at (13, -0.4) {$v_l$};

\node[draw=none, fill=none] at (11.5, -0.9) {(iii)};

\node (vi4) at (0, -2) [fill=black] {};
\node (vj4) at (1, -2) [fill=red] {};
\node (vk4) at (2, -2) [fill=green] {};
\node (vl4) at (3, -2) [fill=red] {};

\draw (vi4) .. controls (1, -1.5) .. (vk4);
\draw (vi4) .. controls (1, -2) .. (vj4);
\draw (vk4) .. controls (2, -2) .. (vl4);

\node[draw=none, fill=none] at (0, -2.4) {$v_i$};
\node[draw=none, fill=none] at (1, -2.4) {$v_j$};
\node[draw=none, fill=none] at (2, -2.4) {$v_k$};
\node[draw=none, fill=none] at (3, -2.4) {$v_l$};

\node[draw=none, fill=none] at (1.5, -2.9) {(iv)};

\node (vi5) at (5, -2) [fill=red] {};
\node (vj5) at (6, -2) [fill=black] {};
\node (vk5) at (7, -2) [fill=red] {};
\node (vl5) at (8, -2) [fill=green] {};

\draw (vj5) .. controls (7, -1.5) .. (vl5);

\node[draw=none, fill=none] at (5, -2.4) {$v_i$};
\node[draw=none, fill=none] at (6, -2.4) {$v_j$};
\node[draw=none, fill=none] at (7, -2.4) {$v_k$};
\node[draw=none, fill=none] at (8, -2.4) {$v_l$};

\node[draw=none, fill=none] at (6.5, -2.9) {(v)};

\node (vi6) at (10, -2) [fill=red] {};
\node (vj6) at (11, -2) [fill=black] {};
\node (vk6) at (12, -2) [fill=red] {};
\node (vl6) at (13, -2) [fill=green] {};

\draw (vj6) .. controls (11, -2) .. (vk6);
\draw (vj6) .. controls (12, -1.5) .. (vl6);

\node[draw=none, fill=none] at (10, -2.4) {$v_i$};
\node[draw=none, fill=none] at (11, -2.4) {$v_j$};
\node[draw=none, fill=none] at (12, -2.4) {$v_k$};
\node[draw=none, fill=none] at (13, -2.4) {$v_l$};

\node[draw=none, fill=none] at (11.5, -2.9) {(vi)};

\node (vi7) at (0, -4) [fill=red] {};
\node (vj7) at (1, -4) [fill=black] {};
\node (vk7) at (2, -4) [fill=red] {};
\node (vl7) at (3, -4) [fill=green] {};

\draw (vi7) .. controls (1.5, -3.3) .. (vl7);
\draw (vj7) .. controls (2, -3.6) .. (vl7);

\node[draw=none, fill=none] at (0, -4.4) {$v_i$};
\node[draw=none, fill=none] at (1, -4.4) {$v_j$};
\node[draw=none, fill=none] at (2, -4.4) {$v_k$};
\node[draw=none, fill=none] at (3, -4.4) {$v_l$};

\node[draw=none, fill=none] at (1.5, -4.9) {(vii)};

\node (vi8) at (5, -4) [fill=red] {};
\node (vj8) at (6, -4) [fill=black] {};
\node (vk8) at (7, -4) [fill=red] {};
\node (vl8) at (8, -4) [fill=green] {};

\draw (vi8) .. controls (6.5, -3.2) .. (vl8);
\draw (vj8) .. controls (7, -3.6) .. (vl8);
\draw (vj8) .. controls (6, -4) .. (vk8);

\node[draw=none, fill=none] at (5, -4.4) {$v_i$};
\node[draw=none, fill=none] at (6, -4.4) {$v_j$};
\node[draw=none, fill=none] at (7, -4.4) {$v_k$};
\node[draw=none, fill=none] at (8, -4.4) {$v_l$};

\node[draw=none, fill=none] at (6.5, -4.9) {(viii)};

\node (vi9) at (10, -4) [fill=black] {};
\node (vj9) at (11, -4) [fill=black] {};
\node (vk9) at (12, -4) [fill=green] {};
\node (vl9) at (13, -4) [fill=red] {};

\draw (vi9) .. controls (11, -3.5) .. (vk9);

\node[draw=none, fill=none] at (10, -4.4) {$v_i$};
\node[draw=none, fill=none] at (11, -4.4) {$v_j$};
\node[draw=none, fill=none] at (12, -4.4) {$v_k$};
\node[draw=none, fill=none] at (13, -4.4) {$v_l$};

\node[draw=none, fill=none] at (11.5, -4.9) {(ix)};

\node (vi10) at (0, -6) [fill=black] {};
\node (vj10) at (1, -6) [fill=black] {};
\node (vk10) at (2, -6) [fill=green] {};
\node (vl10) at (3, -6) [fill=red] {};

\draw (vi10) .. controls (1, -5.5) .. (vk10);
\draw (vk10) .. controls (2, -6) .. (vl10);

\node[draw=none, fill=none] at (0, -6.4) {$v_i$};
\node[draw=none, fill=none] at (1, -6.4) {$v_j$};
\node[draw=none, fill=none] at (2, -6.4) {$v_k$};
\node[draw=none, fill=none] at (3, -6.4) {$v_l$};

\node[draw=none, fill=none] at (1.5, -6.9) {(x)};

\node (vi11) at (5, -6) [fill=black] {};
\node (vj11) at (6, -6) [fill=black] {};
\node (vk11) at (7, -6) [fill=green] {};
\node (vl11) at (8, -6) [fill=red] {};

\draw (vi11) .. controls (6, -5.5) .. (vk11);
\draw (vj11) .. controls (7, -5.5) .. (vl11);

\node[draw=none, fill=none] at (5, -6.4) {$v_i$};
\node[draw=none, fill=none] at (6, -6.4) {$v_j$};
\node[draw=none, fill=none] at (7, -6.4) {$v_k$};
\node[draw=none, fill=none] at (8, -6.4) {$v_l$};

\node[draw=none, fill=none] at (6.5, -6.9) {(xi)};

\node (vi12) at (10, -6) [fill=black] {};
\node (vj12) at (11, -6) [fill=black] {};
\node (vk12) at (12, -6) [fill=green] {};
\node (vl12) at (13, -6) [fill=red] {};

\draw (vi12) .. controls (11, -5.5) .. (vk12);
\draw (vj12) .. controls (12, -5.5) .. (vl12);
\draw (vk12) .. controls (12, -6) .. (vl12);

\node[draw=none, fill=none] at (10, -6.4) {$v_i$};
\node[draw=none, fill=none] at (11, -6.4) {$v_j$};
\node[draw=none, fill=none] at (12, -6.4) {$v_k$};
\node[draw=none, fill=none] at (13, -6.4) {$v_l$};

\node[draw=none, fill=none] at (11.5, -6.9) {(xii)};

\node (vi13) at (0, -8) [fill=red] {};
\node (vj13) at (1, -8) [fill=black] {};
\node (vk13) at (2, -8) [fill=black] {};
\node (vl13) at (3, -8) [fill=green] {};

\draw (vj13) .. controls (2, -7.5) .. (vl13);

\node[draw=none, fill=none] at (0, -8.4) {$v_i$};
\node[draw=none, fill=none] at (1, -8.4) {$v_j$};
\node[draw=none, fill=none] at (2, -8.4) {$v_k$};
\node[draw=none, fill=none] at (3, -8.4) {$v_l$};

\node[draw=none, fill=none] at (1.5, -8.9) {(xiii)};

\node (vi14) at (5, -8) [fill=red] {};
\node (vj14) at (6, -8) [fill=black] {};
\node (vk14) at (7, -8) [fill=black] {};
\node (vl14) at (8, -8) [fill=green] {};

\draw (vi14) .. controls (6.5, -7.) .. (vl14);
\draw (vj14) .. controls (7, -7.5) .. (vl14);

\node[draw=none, fill=none] at (5, -8.4) {$v_i$};
\node[draw=none, fill=none] at (6, -8.4) {$v_j$};
\node[draw=none, fill=none] at (7, -8.4) {$v_k$};
\node[draw=none, fill=none] at (8, -8.4) {$v_l$};

\node[draw=none, fill=none] at (6.5, -8.9) {(xiv)};

\node (vi15) at (10, -8) [fill=red] {};
\node (vj15) at (11, -8) [fill=black] {};
\node (vk15) at (12, -8) [fill=black] {};
\node (vl15) at (13, -8) [fill=green] {};

\draw (vi15) .. controls (11, -7.5) .. (vk15);
\draw (vj15) .. controls (12, -7.5) .. (vl15);

\node[draw=none, fill=none] at (10, -8.4) {$v_i$};
\node[draw=none, fill=none] at (11, -8.4) {$v_j$};
\node[draw=none, fill=none] at (12, -8.4) {$v_k$};
\node[draw=none, fill=none] at (13, -8.4) {$v_l$};

\node[draw=none, fill=none] at (11.5, -8.9) {(xv)};

\node (vi16) at (0, -10) [fill=red] {};
\node (vj16) at (1, -10) [fill=black] {};
\node (vk16) at (2, -10) [fill=black] {};
\node (vl16) at (3, -10) [fill=green] {};

\draw (vi16) .. controls (1.5, -9.) .. (vl16);
\draw (vi16) .. controls (1, -9.5) .. (vk16);
\draw (vj16) .. controls (2., -9.5) .. (vl16);

\node[draw=none, fill=none] at (0, -10.4) {$v_i$};
\node[draw=none, fill=none] at (1, -10.4) {$v_j$};
\node[draw=none, fill=none] at (2, -10.4) {$v_k$};
\node[draw=none, fill=none] at (3, -10.4) {$v_l$};

\node[draw=none, fill=none] at (1.5, -10.9) {(xvi)};

\node (vi17) at (5, -10) [fill=black] {};
\node (vj17) at (6, -10) [fill=black] {};
\node (vk17) at (7, -10) [fill=green] {};
\node (vl17) at (8,-10) [fill=green] {};

\draw (vi17) .. controls (6, -9.5) .. (vk17);

\node[draw=none, fill=none] at (5, -10.4) {$v_i$};
\node[draw=none, fill=none] at (6, -10.4) {$v_j$};
\node[draw=none, fill=none] at (7, -10.4) {$v_k$};
\node[draw=none, fill=none] at (8, -10.4) {$v_l$};

\node[draw=none, fill=none] at (6.5, -10.9) {(xvii)};

\node (vi18) at (10, -10) [fill=black] {};
\node (vj18) at (11, -10) [fill=black] {};
\node (vk18) at (12, -10) [fill=green] {};
\node (vl18) at (13,-10) [fill=green] {};

\draw (vi18) .. controls (11., -9.5) .. (vk18);
\draw (vj18) .. controls (12., -9.5) .. (vl18);

\node[draw=none, fill=none] at (10, -10.4) {$v_i$};
\node[draw=none, fill=none] at (11, -10.4) {$v_j$};
\node[draw=none, fill=none] at (12, -10.4) {$v_k$};
\node[draw=none, fill=none] at (13, -10.4) {$v_l$};

\node[draw=none, fill=none] at (11.5, -10.9) {(xviii)};

\node (vi19) at (0, -12) [fill=green] {};
\node (vj19) at (1, -12) [fill=black] {};
\node (vk19) at (2, -12) [fill=black] {};
\node (vl19) at (3, -12) [fill=green] {};

\draw (vj19) .. controls (2, -11.5) .. (vl19);

\node[draw=none, fill=none] at (0, -12.4) {$v_i$};
\node[draw=none, fill=none] at (1, -12.4) {$v_j$};
\node[draw=none, fill=none] at (2, -12.4) {$v_k$};
\node[draw=none, fill=none] at (3, -12.4) {$v_l$};

\node[draw=none, fill=none] at (1.5, -12.9) {(xix)};

\node (vi20) at (5, -12) [fill=green] {};
\node (vj20) at (6, -12) [fill=black] {};
\node (vk20) at (7, -12) [fill=black] {};
\node (vl20) at (8, -12) [fill=green] {};

\draw (vi20) .. controls (6, -11.5) .. (vk20);
\draw (vj20) .. controls (7, -11.5) .. (vl20);

\node[draw=none, fill=none] at (5, -12.4) {$v_i$};
\node[draw=none, fill=none] at (6, -12.4) {$v_j$};
\node[draw=none, fill=none] at (7, -12.4) {$v_k$};
\node[draw=none, fill=none] at (8, -12.4) {$v_l$};

\node[draw=none, fill=none] at (6.5, -12.9) {(xx)};
\node (vi21) at (10, -12) [fill=black] {};
\node (vj21) at (11, -12) [fill=red] {};
\node (vk21) at (12, -12) [fill=green] {};
\node (vl21) at (13, -12) [fill=green] {};

\draw (vi21) .. controls (11, -11.5) .. (vk21);

\node[draw=none, fill=none] at (10, -12.4) {$v_i$};
\node[draw=none, fill=none] at (11, -12.4) {$v_j$};
\node[draw=none, fill=none] at (12, -12.4) {$v_k$};
\node[draw=none, fill=none] at (13, -12.4) {$v_l$};

\node[draw=none, fill=none] at (11.5, -12.9) {(xxi)};

\node (vi22) at (0, -14) [fill=black] {};
\node (vj22) at (1, -14) [fill=red] {};
\node (vk22) at (2, -14) [fill=green] {};
\node (vl22) at (3, -14) [fill=green] {};

\draw (vi22) .. controls (1, -13.5) .. (vk22);
\draw (vi22) .. controls (0, -14) .. (vj22);

\node[draw=none, fill=none] at (0, -14.4) {$v_i$};
\node[draw=none, fill=none] at (1, -14.4) {$v_j$};
\node[draw=none, fill=none] at (2, -14.4) {$v_k$};
\node[draw=none, fill=none] at (3, -14.4) {$v_l$};

\node[draw=none, fill=none] at (1.5, -14.9) {(xxii)};

\node (vi23) at (5, -14) [fill=black] {};
\node (vj23) at (6, -14) [fill=red] {};
\node (vk23) at (7, -14) [fill=green] {};
\node (vl23) at (8, -14) [fill=green] {};

\draw (vi23) .. controls (6, -13.5) .. (vk23);
\draw (vj23) .. controls (7, -13.5) .. (vl23);

\node[draw=none, fill=none] at (5, -14.4) {$v_i$};
\node[draw=none, fill=none] at (6, -14.4) {$v_j$};
\node[draw=none, fill=none] at (7, -14.4) {$v_k$};
\node[draw=none, fill=none] at (8, -14.4) {$v_l$};

\node[draw=none, fill=none] at (6.5, -14.9) {(xxiii)};

\node (vi24) at (10, -14) [fill=black] {};
\node (vj24) at (11, -14) [fill=red] {};
\node (vk24) at (12, -14) [fill=green] {};
\node (vl24) at (13, -14) [fill=green] {};

\draw (vi24) .. controls (11, -13.5) .. (vk24);
\draw (vj24) .. controls (12, -13.5) .. (vl24);
\draw (vi24) .. controls (10, -14) .. (vj24);

\node[draw=none, fill=none] at (10, -14.4) {$v_i$};
\node[draw=none, fill=none] at (11, -14.4) {$v_j$};
\node[draw=none, fill=none] at (12, -14.4) {$v_k$};
\node[draw=none, fill=none] at (13, -14.4) {$v_l$};

\node[draw=none, fill=none] at (11.5, -14.9) {(xxiv)};

\node (vi25) at (0, -16) [fill=green] {};
\node (vj25) at (1, -16) [fill=black] {};
\node (vk25) at (2, -16) [fill=red] {};
\node (vl25) at (3, -16) [fill=green] {};

\draw (vj25) .. controls (2, -15.5) .. (vl25);

\node[draw=none, fill=none] at (0, -16.4) {$v_i$};
\node[draw=none, fill=none] at (1, -16.4) {$v_j$};
\node[draw=none, fill=none] at (2, -16.4) {$v_k$};
\node[draw=none, fill=none] at (3, -16.4) {$v_l$};

\node[draw=none, fill=none] at (1.5, -16.9) {(xxv)};

\node (vi26) at (5, -16) [fill=green] {};
\node (vj26) at (6, -16) [fill=black] {};
\node (vk26) at (7, -16) [fill=red] {};
\node (vl26) at (8, -16) [fill=green] {};

\draw (vj26) .. controls (7, -15.5) .. (vl26);
\draw (vj26) .. controls (6, -16) .. (vk26);

\node[draw=none, fill=none] at (5, -16.4) {$v_i$};
\node[draw=none, fill=none] at (6, -16.4) {$v_j$};
\node[draw=none, fill=none] at (7, -16.4) {$v_k$};
\node[draw=none, fill=none] at (8, -16.4) {$v_l$};

\node[draw=none, fill=none] at (6.5, -16.9) {(xxvi)};

\node (vi27) at (10, -16) [fill=green] {};
\node (vj27) at (11, -16) [fill=black] {};
\node (vk27) at (12, -16) [fill=red] {};
\node (vl27) at (13, -16) [fill=green] {};

\draw (vi27) .. controls (11, -15.5) .. (vk27);
\draw (vj27) .. controls (12, -15.5) .. (vl27);

\node[draw=none, fill=none] at (10, -16.4) {$v_i$};
\node[draw=none, fill=none] at (11, -16.4) {$v_j$};
\node[draw=none, fill=none] at (12, -16.4) {$v_k$};
\node[draw=none, fill=none] at (13, -16.4) {$v_l$};

\node[draw=none, fill=none] at (11.5, -16.9) {(xxvii)};
\node (vi28) at (5, -18) [fill=green] {};
\node (vj28) at (6, -18) [fill=black] {};
\node (vk28) at (7, -18) [fill=red] {};
\node (vl28) at (8, -18) [fill=green] {};

\draw (vi28) .. controls (6, -17.5) .. (vk28);
\draw (vj28) .. controls (7, -17.5) .. (vl28);
\draw (vj28) .. controls (6, -18) .. (vk28);

\node[draw=none, fill=none] at (5, -18.4) {$v_i$};
\node[draw=none, fill=none] at (6, -18.4) {$v_j$};
\node[draw=none, fill=none] at (7, -18.4) {$v_k$};
\node[draw=none, fill=none] at (8, -18.4) {$v_l$};

\node[draw=none, fill=none] at (6.5, -18.9) {(xxviii)};

    \end{tikzpicture}
\caption{Forbidden patterns for circular-arc 3-graphs.}
\end{figure}
\begin{proof}
 Necessity: Let $B = (X_1, X_2, X_3, E)$ be a circular-arc $3$-graph with $n$ vertices. 
Then there exists a family
$\mathcal{A} = \{A_v : v \in \bigcup_{i=1}^{3} X_i\} $
of circular arcs on a host circle such that $uv \in E$ if and only if 
$A_u \cap A_v \neq \varnothing$, where $u$ and $v$ belong to different partite sets.  

Arrange the vertices of $B$ in increasing order of the clockwise endpoints of their 
corresponding circular arcs. Denote this ordering by $v_1, v_2, v_3, \ldots, v_n$. 
We now show that, under this ordering, the configurations illustrated in Figure~7 
cannot occur.  

Consider four vertices $v_i, v_j, v_k, v_\ell$ such that $i < j < k < \ell$. 
If these vertices belong to only two of the partite sets (and not all three), 
then by Theorem~4 of \cite{paul_das}, the configurations  (xvii), (xviii), (xix), and (xx) 
cannot occur.  

Now suppose the vertices $v_i, v_j, v_k, v_\ell$ belong to all three partite sets. 
For convenience, let us color the partite sets black, red, and green. 
Assume further that $v_i v_k \in E$ (i.e $A_{v_i}\cap A_{v_k}\neq\emptyset$), where $v_i\in X_\alpha$ is black and $v_k\in X_\beta$ is green.  it leads to the following two possible cases:\\
 \textbf{Case 1.}
   \begin{figure}[H]
       \centering
       \begin{tikzpicture}[line cap=round,line join=round,x=1.0cm,y=1.0cm,scale=.9]
\clip(6.,3.) rectangle (13.,9.);
\draw [line width=.5pt] (9.,6.) circle (2.cm);
\draw [shift={(9.,6.)},line width=.5pt]  plot[domain=-2.7367008673047097:0.3876695783739986,variable=\t]({1.*2.5893628559937287*cos(\t r)+0.*2.5893628559937287*sin(\t r)},{0.*2.5893628559937287*cos(\t r)+1.*2.5893628559937287*sin(\t r)});
\draw [shift={(9.,6.)},line width=.5pt]  plot[domain=0.32175055439664224:0.8204714939306736,variable=\t]({1.*3.1622776601683795*cos(\t r)+0.*3.1622776601683795*sin(\t r)},{0.*3.1622776601683795*cos(\t r)+1.*3.1622776601683795*sin(\t r)});
\begin{scriptsize}
\draw [fill=white] (9.5,7.94) circle (1.pt);
\draw (9.3,7.9) node[anchor=north west,scale=1] {$v_1$};

\draw [fill=white] (9.,8.) circle (1.pt);
\draw (8.8,8) node[anchor=north west,scale=1] {$v_n$};

\draw [fill=white] (8.5,7.94) circle (1.pt);
\draw (8.,7.9) node[anchor=north west,scale=1] {$v_{n-1}$};

\draw [fill=black] (10.83,6.8) circle (2.pt);
\draw (10.1,7.2) node[anchor=north west,scale=1.5] {$v_i$};

\draw [fill=white] (10.005240914749061,4.270985626631615) circle (2.pt);
\draw (9.5,5) node[anchor=north west,scale=1.5] {$v_j$};

\draw [fill=green] (7.17,5.2) circle (2.pt);
\draw (7.2,5.6) node[anchor=north west,scale=1.5] {$v_k$};

\draw [fill=white] (7.283275503697917,7.026088204686302) circle (2.pt);
\draw (7.2,7.3) node[anchor=north west,scale=1.5] {$v_l$};

\draw [fill=black] (11.32,7.1) circle (.5pt);
\draw [fill=black] (11.1,7.36) circle (.5pt);
\draw [fill=black] (10.86,7.62) circle (.5pt);
\draw [fill=black] (10.3,8.8) circle (.5pt);
\draw [fill=black] (10.6,8.7) circle (.5pt);
\draw [fill=black] (10.91,8.5) circle (.5pt);
\draw (8,4.2) node[anchor=north west,scale=1.5] {$A_{v_k}$};
\draw (11.8,8) node[anchor=north west,scale=1.5] {$A_{v_i}$};

\end{scriptsize}
\end{tikzpicture}
       \caption{Clockwise end point of $A_{v_i}$ lies in $A_{v_k}$ ($v_j\in X \setminus X_\beta$, $v_l\in X\setminus X_\alpha$).}
       \label{fig:enter-label-6}
   \end{figure}
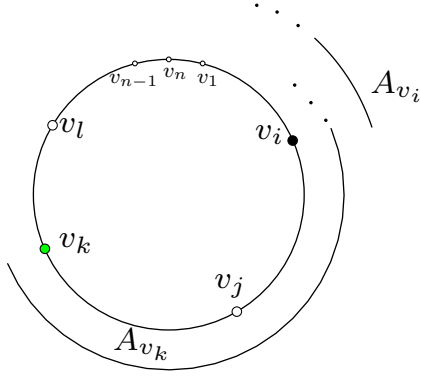
   In this case the configurations (i)-(iv), (ix)-(xii), and (xxi)-(xxiv) in Figure 7 will not occur.

   \textbf{Case 2.}
   
\begin{figure}[H]
    \centering

   \begin{tikzpicture}[line cap=round,line join=round,x=1.0cm,y=1.0cm,scale=.9]
\clip(5.,1.) rectangle (11.,8.);
\draw [line width=.5pt] (8.,5.) circle (2.cm);
\draw [shift={(8.,5.)},line width=.5pt]  plot[domain=0.3966104021074092:3.53219969728748,variable=\t]({1.*2.536927275268253*cos(\t r)+0.*2.536927275268253*sin(\t r)},{0.*2.536927275268253*cos(\t r)+1.*2.536927275268253*sin(\t r)});
\draw [shift={(8.,5.)},line width=.5pt]  plot[domain=3.4972285378905528:4.71238898038469,variable=\t]({1.*2.986904752415115*cos(\t r)+0.*2.986904752415115*sin(\t r)},{0.*2.986904752415115*cos(\t r)+1.*2.986904752415115*sin(\t r)});
\begin{scriptsize}
\draw [fill=white] (8.,7.) circle (1pt);
\draw [fill=white] (8.827605888602369,6.820732954925209) circle (1pt);
\draw [fill=white] (7.189651391201308,6.828478912161151) circle (1pt);
\draw [fill=black] (9.85,5.8) circle (2.5pt);
\draw [fill=white] (9.714985851425087,3.9710084891449458) circle (2.5pt);
\draw [fill=green] (6.13,4.3) circle (2.5pt);
\draw [fill=white] (6.269596358729952,6.002847564826958) circle (2.5pt);
\draw [fill=black] (5.74,3.92) circle (.5pt);
\draw [fill=black] (5.98,3.68) circle (.5pt);
\draw [fill=black] (6.2,3.46) circle (.5pt);
\draw [fill=black] (8.26,2.08) circle (.5pt);
\draw [fill=black] (8.64,2.16) circle (.5pt);
\draw [fill=black] (8.96,2.28) circle (.5pt);
\draw (7.8,7) node[anchor=north west,scale=1] {$v_n$};
\draw (7.,6.8) node[anchor=north west,scale=1] {$v_{n-1}$};
\draw (8.4,6.8) node[anchor=north west,scale=1] {$v_1$};
\draw (5,3) node[anchor=north west,scale=1.5] {$A_{v_k}$};
\draw (8,8.1) node[anchor=north west,scale=1.5] {$A_{v_i}$};
\draw (9,6.1) node[anchor=north west,scale=1.5] {$v_i$};
\draw (9,4.3) node[anchor=north west,scale=1.5] {$v_j$};
\draw (6.2,4.4) node[anchor=north west,scale=1.5] {$v_k$};
\draw (6.2,6.1) node[anchor=north west,scale=1.5] {$v_l$};

\end{scriptsize}
\end{tikzpicture}
 \caption{Clockwise end point of $A_{v_k}$ lies in $A_{v_i}$ ($v_j\in X \setminus X_\beta$, $v_l\in X\setminus X_\alpha$).}
    \label{fig:enter-label-7}
\end{figure}
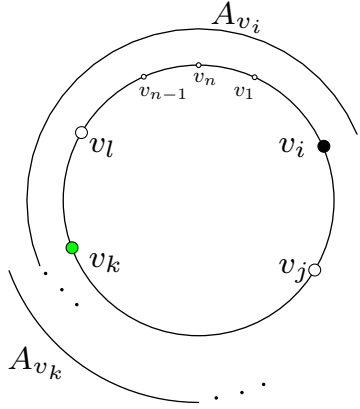
Similar to Case 1, in this case also the configurations (i)-(iv), (ix)-(xii), and (xxi)-(xxiv) in Figure 7 will not occur.\\
\par If $v_j v_l \in E$ (i.e $A_{v_j}\cap A_{v_l}\neq\emptyset$), where $v_j\in X_\alpha$ is black and $v_l\in X_\beta$ is green.  it leads to the following two possible cases:\\

   \textbf{Case 3.}
   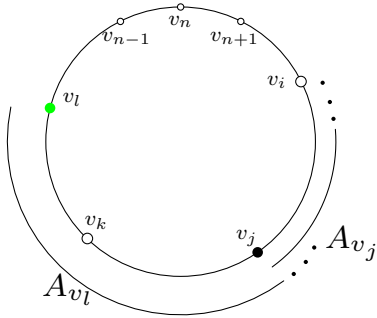
\begin{figure}[H]
       \centering
       \begin{tikzpicture}[line cap=round,line join=round,x=1.0cm,y=1.0cm,scale=.8]
\clip(2,1) rectangle (10,8);
\draw(6,5) circle (2.24cm);
\draw [shift={(6,5)}] plot[domain=2.94:5.35,variable=\t]({1*2.88*cos(\t r)+0*2.88*sin(\t r)},{0*2.88*cos(\t r)+1*2.88*sin(\t r)});
\draw [shift={(6,5)}] plot[domain=-0.94:0.08,variable=\t]({1*2.58*cos(\t r)+0*2.58*sin(\t r)},{0*2.58*cos(\t r)+1*2.58*sin(\t r)});
\begin{scriptsize}
\draw [fill=white] (8,6) circle (2.5pt);
\fill [color=black] (7.28,3.16) circle (2.5pt);
\draw[color=black] (7.1,3.4) node {$v_j$};
\draw[color=black] (7.6,6.) node {$v_i$};
\draw [fill=white] (4.45,3.39) circle (2.5pt);
\draw[color=black] (4.6,3.66) node {$v_k$};
\fill [color=green] (3.83,5.56) circle (2.5pt);
\draw[color=black] (4.2,5.7) node {$v_l$};
\fill [color=black] (7.88,2.8) circle (1.pt);
\fill [color=black] (8.08,3) circle (1.pt);
\fill [color=black] (8.22,3.24) circle (1.pt);
\fill [color=black] (8.54,5.38) circle (1.pt);
\fill [color=black] (8.48,5.66) circle (1.pt);
\fill [color=black] (8.36,5.98) circle (1pt);
\draw [fill=white] (6,7.24) circle (1.5pt);
\draw[color=black] (6,7.) node {$v_n$};

\draw[fill=white] (5,7) circle (1.5pt);
\draw[color=black] (5.1,6.7) node {$v_{n-1}$};

\draw [fill=white] (7,7) circle (1.5pt);
\draw[color=black] (6.9,6.7) node {$v_{n+1}$};
\draw[color=black] (3.5,3.) node[anchor=north west,scale=1.5] {$A_{v_l}$};
\draw[color=black] (8.2,3.8) node[anchor=north west,scale=1.5] {$A_{v_j}$};

\end{scriptsize}
\end{tikzpicture}
       \caption{Clockwise end point of $A_{v_j}$ lies in $A_{v_l}$ ($v_i\in X \setminus X_\alpha$, $v_k\in X\setminus X_\beta$).}
       \label{fig:enter-label-6}
   \end{figure}
In this case the configurations (v)-(viii), (xiii)-(xvi), and (xxv)-(xxviii) in Figure 7 will not occur.\\

   \textbf{Case 4.}
   
\begin{figure}[H]
    \centering

  \begin{tikzpicture}[line cap=round,line join=round,x=1.0cm,y=1.0cm,scale=.8]
\clip(2,1) rectangle (10,9);
\draw(6,5) circle (2.24cm);
\draw [shift={(6,5)}] plot[domain=2.91:3.6,variable=\t]({1*2.69*cos(\t r)+0*2.69*sin(\t r)},{0*2.69*cos(\t r)+1*2.69*sin(\t r)});
\draw [shift={(6,5)}] plot[domain=-0.94:2.93,variable=\t]({1*2.99*cos(\t r)+0*2.99*sin(\t r)},{0*2.99*cos(\t r)+1*2.99*sin(\t r)});
\begin{scriptsize}
\fill [color=red] (8,6) circle (2.5pt);
\fill [color=black] (7.28,3.16) circle (2.5pt);
\draw[color=black] (7.1,3.4) node {$v_j$};
\draw[color=black] (7.6,6.) node {$v_i$};

\fill [color=red] (4.45,3.39) circle (2.5pt);
\draw[color=black] (4.6,3.66) node {$v_k$};
\fill [color=green] (3.83,5.56) circle (2.5pt);
\draw[color=black] (4.2,5.7) node {$v_l$};
\fill [color=black] (3.06,5.5) circle (1.pt);
\fill [color=black] (3.07,5.3) circle (1.pt);
\fill [color=black] (3.1,5.1) circle (1.pt);
\fill [color=black] (3.8,3.5) circle (1.pt);
\fill [color=black] (3.96,3.3) circle (1.pt);
\fill [color=black] (4.138,3.1) circle (1pt);
\draw [fill=white] (6,7.24) circle (1.5pt);
\draw[color=black] (6,7.) node {$v_n$};

\draw[fill=white] (5,7) circle (1.5pt);
\draw[color=black] (5.1,6.7) node {$v_{n-1}$};

\draw [fill=white] (7,7) circle (1.5pt);
\draw[color=black] (6.9,6.7) node {$v_{n+1}$};
\draw[color=black] (2.2,5.) node[anchor=north west,scale=1.5] {$A_{v_l}$};
\draw[color=black] (8.5,6.8) node[anchor=north west,scale=1.5]{$A_{v_j}$};
\end{scriptsize}
\end{tikzpicture}
 \caption{Clockwise end point of $A_{v_l}$ lies in $A_{v_j}$.}
    \label{fig:enter-label-7}
\end{figure}
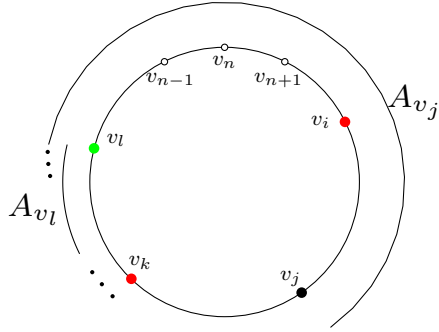
Similar to Case 3, in this case the configurations (v)-(viii), (xiii)-(xvi), and (xxv)-(xxviii) in Figure 7 will not occur.\\
Therefore, if $B=(X_1,X_2,X_3,E)$ is a circular-arc 3-graph then there exist an ordering  $v_1,v_2,...,v_n$ of the vertices of $B$ such that no indices $i<j<k<l$  occur in the configurations in Figure 7.
\par \textbf{Sufficiency:}  
Let us assume that the vertices of $B = (X_1, X_2, X_3, E)$
can be ordered as $v_1, v_2, \dots, v_n$
such that no four indices \( i < j < k < \ell \) correspond to any of the forbidden configurations shown in Figure~7.  

We now construct a family of circular arcs
$\mathcal{A} = \{ A_{v_i} : 1 \leq i \leq n \}$
associated with the vertices of \( B \).  

Suppose \( v_i \in X_\alpha \) for some \( \alpha \in \{1,2,3\} \). Define $A_{v_i} = [m_i, i], \quad 1 \leq i \leq n$,
where \( v_{m_i} \in X \setminus X_\alpha \) is the last consecutive vertex (outside the partite set \( X_\alpha \)) that is adjacent to \( v_i \) when traversing anticlockwise starting from \( v_i \).  

It remains to show that
$A_{v_i} \cap A_{v_k} \neq \varnothing 
\quad \Longleftrightarrow \quad 
v_i v_k \in E$,
where \( v_i \) and \( v_k \) belong to different partite sets.  

If \( A_{v_i} \cap A_{v_k} \neq \varnothing \), then the intersection arises in one of the two possible ways illustrated in Figures~12(i) and 12(ii). Suppose \( v_i \in X_\alpha \) and \( v_k \in X_\beta \) with \( \alpha \neq \beta \). Without loss of generality, let us assign colors: vertices of \( X_\alpha \) are colored black, vertices of \( X_\beta \) green, and the remaining vertices (outside \( X_\alpha \cup X_\beta \)) red.

 \begin{figure}[H]
     \centering
     \begin{tikzpicture}[line cap=round,line join=round,x=1.0cm,y=1.0cm,scale=.9]
\clip(-4.5,.7) rectangle (12,9.);
\draw [line width=.5pt] (7.980034032898469,4.820306296086217) circle (2.1797851460003312cm);
\draw [shift={(7.980034032898469,4.820306296086217)},line width=.5pt]  plot[domain=0.4567461636507309:3.468531325972799,variable=\t]({1.*2.6295111433068192*cos(\t r)+0.*2.6295111433068192*sin(\t r)},{0.*2.6295111433068192*cos(\t r)+1.*2.6295111433068192*sin(\t r)});
\draw [line width=.5pt] (-1.,5.) circle (2.cm);
\draw [shift={(8.,5.)},line width=.5pt]  plot[domain=3.2095853215937065:4.71238898038469,variable=\t]({1.*2.9694753582657367*cos(\t r)+0.*2.9694753582657367*sin(\t r)},{0.*2.9694753582657367*cos(\t r)+1.*2.9694753582657367*sin(\t r)});
\draw [shift={(-1.,5.)},line width=.5pt]  plot[domain=-3.0722065965203265:0.9708947512270215,variable=\t]({1.*2.33397037997512*cos(\t r)+0.*2.33397037997512*sin(\t r)},{0.*2.33397037997512*cos(\t r)+1.*2.33397037997512*sin(\t r)});
\draw [shift={(-1.,5.)},line width=.5pt]  plot[domain=0.499684617825641:1.3763509501398137,variable=\t]({1.*2.579162802238808*cos(\t r)+0.*2.579162802238808*sin(\t r)},{0.*2.579162802238808*cos(\t r)+1.*2.579162802238808*sin(\t r)});
\begin{scriptsize}
\draw [fill=white] (7.980034032898469,7.000091442086548) circle (1pt);
\draw (7.6,7) node[anchor=north west,scale=1.5] {$v_n$};

\draw [fill=black] (9.93,5.8) circle (2.5pt);
\draw [fill=green] (5.800248886898137,4.820306296086217) circle (2.5pt);
\draw [fill=black] (8.26,2.08) circle (.5pt);
\draw [fill=black] (8.64,2.16) circle (.5pt);
\draw [fill=black] (8.96,2.28) circle (.5pt);
\draw [fill=white] (-1.,7.) circle (1pt);
\draw (-1.3,7) node[anchor=north west,scale=1.5] {$v_n$};

\draw [fill=white] (5.966228857665243,3.986008987611857) circle (1.5pt);
\draw [fill=green] (-2.989990136070066,4.800151911833412) circle (2.5pt);
\draw [fill=black] (0.712607201786847,6.032945580554864) circle (2.5pt);
\draw [fill=white] (0.12912247494201323,6.650782371053442) circle (1.5pt);
\draw [fill=black] (-0.7,7.56) circle (.5pt);
\draw [fill=black] (-1.,7.54) circle (.5pt);
\draw [fill=black] (-1.3,7.5) circle (.5pt);
\draw (9.2,6.1) node[anchor=north west,scale=1.5] {$v_i$};
\draw (5.9,5.1) node[anchor=north west,scale=1.5] {$v_k$};
\draw (5.9,4.4) node[anchor=north west,scale=1.5] {$v_{m_i}$};
\draw (.1,6.1) node[anchor=north west,scale=1.5] {$v_i$};
\draw (-.5,6.7) node[anchor=north west,scale=1.5] {$v_{m_k}$};
\draw (-3,5.1) node[anchor=north west,scale=1.5] {$v_k$};
\draw (6.2,7.8) node[anchor=north west,scale=1.5] {$A_{v_i}$};
\draw (4.9,3.1) node[anchor=north west,scale=1.5] {$A_{v_k}$};
\draw (0.2,7.8) node[anchor=north west,scale=1.5] {$A_{v_i}$};
\draw (-2.2,2.8) node[anchor=north west,scale=1.5] {$A_{v_k}$};
\draw (-3,1.6) node[anchor=north west,scale=1.2] {Figure $12(i)$:};
\draw (-4.6,1.3) node[anchor=north west,scale=1.2] { clockwise end point of $A_{v_i}$ lies in $A_{v_k}$.};
\draw (6,1.6) node[anchor=north west,scale=1.2] {Figure $12(ii)$:};
\draw (4.2,1.3) node[anchor=north west,scale=1.2] { clockwise end point of $A_{v_k}$ lies in $A_{v_i}$.};

\end{scriptsize}
\end{tikzpicture}
     \label{fig:enter-label-8}
 \end{figure}
Therefore, in either case, by the construction of \(A_{v_i}\) and \(A_{v_k}\), it is clear that $v_i v_k \in E$.
Thus, \(A_{v_i} \cap A_{v_k} \neq \emptyset \) implies that \( v_i v_k \in E \).  

Now, suppose, for the sake of contradiction, that \( A_{v_i} \cap A_{v_k} = \emptyset \).  
Then, by the construction of \( A_{v_i} \) and \( A_{v_k} \), there must exist a vertex  
$v_j \notin X_\beta$, $(i < j < k)$,
such that $v_j$ is not adjacent to $v_k$.  
Additionally, there must exist another vertex  
$v_\ell \notin X_\alpha$, 
positioned between \( v_k \) and \( v_i \) in the clockwise order, which is not adjacent to \( v_i \).  

Depending on the position of \( v_\ell \), and also on the partite sets to which the vertices \( v_j \) and \( v_\ell \) belong, we obtain the following cases.

 \begin{figure}[H]
    \centering
    \begin{tikzpicture}[line cap=round,line join=round,x=1.0cm,y=1.0cm,scale=.9]
\clip(2,.5) rectangle (18.,8.);
\draw [line width=.5pt] (6.,5.) circle (2.cm);
\draw [shift={(6.,5.)},line width=.5pt]  plot[domain=-0.008333140440135445:2.14717154738608,variable=\t]({1.*2.4000833318866244*cos(\t r)+0.*2.4000833318866244*sin(\t r)},{0.*2.4000833318866244*cos(\t r)+1.*2.4000833318866244*sin(\t r)});
\draw [shift={(6.,5.)},line width=.5pt]  plot[domain=3.141592653589793:5.255715449211019,variable=\t]({1.*2.48*cos(\t r)+0.*2.48*sin(\t r)},{0.*2.48*cos(\t r)+1.*2.48*sin(\t r)});
\draw [line width=.5pt] (13.,5.) circle (2.cm);
\draw [shift={(13.,5.)},line width=.5pt]  plot[domain=3.141592653589793:5.217058339757389,variable=\t]({1.*2.48*cos(\t r)+0.*2.48*sin(\t r)},{0.*2.48*cos(\t r)+1.*2.48*sin(\t r)});
\draw [shift={(13.,5.)},line width=.5pt]  plot[domain=0.:0.8258385310050181,variable=\t]({1.*2.48*cos(\t r)+0.*2.48*sin(\t r)},{0.*2.48*cos(\t r)+1.*2.48*sin(\t r)});
\begin{scriptsize}
\draw [fill=white] (6.,7.) circle (1.5pt);
\draw (5.8,6.8) node[anchor=north west,scale=1] {$v_n$};
\draw [fill=black] (8.,5.) circle (2.5pt);
\draw (7.3,5.2) node[anchor=north west,scale=1.5] {$v_i$};

\draw [fill=red] (7.723868430315539,3.9859597468732124) circle (2.5pt);
\draw (4.1,5.3) node[anchor=north west,scale=1.5] {$v_k$};
\draw (4.3,6.2) node[anchor=north west,scale=1.5] {$v_l$};
\draw (4.9,6.8) node[anchor=north west,scale=1.5] {$v_{m_i}$};
\draw (4.1,3) node[anchor=north west,scale=1.5] {$A_{v_k}$};
\draw (6.1,7.9) node[anchor=north west,scale=1.5] {$A_{v_i}$};

\draw (7.,4.4) node[anchor=north west,scale=1.5] {$v_j$};

\draw [fill=white] (7.0201530344336085,3.279741941935484) circle (1.5pt);
\draw (6.2,3.9) node[anchor=north west,scale=1.5] {$v_{m_k}$};
\draw [fill=green] (4.,5.) circle (2.5pt);

\draw [fill=red] (4.257384112780107,5.981473315790515) circle (2.5pt);
\draw [fill=white] (4.971008489144947,6.714985851425089) circle (1.5pt);
\draw [fill=white] (13.,7.) circle (1.5pt);
\draw [fill=black] (15.,5.) circle (2.5pt);
\draw [fill=white] (14.329997221961367,6.493689187741227) circle (1.5pt);
\draw [fill=red] (13.742781352708207,6.856953381770519) circle (2.5pt);
\draw [fill=green] (11.,5.) circle (2.5pt);
\draw [fill=white] (13.972971344926611,3.2526228907440444) circle (1.5pt);
\draw [fill=red] (14.734027887970035,4.003432248293083) circle (2.5pt);
\draw (14.3,5.3) node[anchor=north west,scale=1.5] {$v_i$};
\draw (11,5.3) node[anchor=north west,scale=1.5] {$v_k$};
\draw (13.9,4.4) node[anchor=north west,scale=1.5] {$v_j$};
\draw (13.2,3.9) node[anchor=north west,scale=1.5] {$v_{m_k}$};
\draw (13.3,7.47) node[anchor=north west,scale=1.5] {$v_l$};
\draw (4.1,5.3) node[anchor=north west,scale=1.5] {$v_k$};
\draw (13.5,6.6) node[anchor=north west,scale=1.5] {$v_{m_i}$};
\draw (12.7,7.5) node[anchor=north west,scale=1.] {$v_n$};
\draw (10.8,3.) node[anchor=north west,scale=1.5] {$A_{v_k}$};
\draw (15.25,6) node[anchor=north west,scale=1.5] {$A_{v_i}$};
\draw (4.5,2.2) node[anchor=north west,scale=1.2] {Figure $13(i)$: };
\draw(2.5,1.8)node[anchor=north west,scale=1.2]{$v_i\in X_{\alpha}$, $v_k\in X_\beta$, and $v_j$, $v_l$ both in $X_\gamma$};
\draw (5.5,1.4) node[anchor=north west,scale=1.2] { $(k<l\leq n)$};

\draw (12.5,2.2) node[anchor=north west,scale=1.2] {Figure $13(ii)$:}; 
\draw(10.2,1.8)node[anchor=north west,scale=1.2]{$v_i\in X_{\alpha}$, $v_k\in X_\beta$, and $v_j$, $v_l$ both in $X_\gamma$};

\draw (13.5,1.4) node[anchor=north west,scale=1.2] {$(1\leq l<i)$.};

\end{scriptsize}
\end{tikzpicture}
    \label{fig: enter-label-9}
\end{figure}

\begin{figure}[H]
    \centering
    \begin{tikzpicture}[line cap=round,line join=round,x=1.0cm,y=1.0cm,scale=.9]
\clip(2,.5) rectangle (18.,8.);
\draw [line width=.5pt] (6.,5.) circle (2.cm);
\draw [shift={(6.,5.)},line width=.5pt]  plot[domain=-0.008333140440135445:2.14717154738608,variable=\t]({1.*2.4000833318866244*cos(\t r)+0.*2.4000833318866244*sin(\t r)},{0.*2.4000833318866244*cos(\t r)+1.*2.4000833318866244*sin(\t r)});
\draw [shift={(6.,5.)},line width=.5pt]  plot[domain=3.141592653589793:5.255715449211019,variable=\t]({1.*2.48*cos(\t r)+0.*2.48*sin(\t r)},{0.*2.48*cos(\t r)+1.*2.48*sin(\t r)});
\draw [line width=.5pt] (13.,5.) circle (2.cm);
\draw [shift={(13.,5.)},line width=.5pt]  plot[domain=3.141592653589793:5.217058339757389,variable=\t]({1.*2.48*cos(\t r)+0.*2.48*sin(\t r)},{0.*2.48*cos(\t r)+1.*2.48*sin(\t r)});
\draw [shift={(13.,5.)},line width=.5pt]  plot[domain=0.:0.8258385310050181,variable=\t]({1.*2.48*cos(\t r)+0.*2.48*sin(\t r)},{0.*2.48*cos(\t r)+1.*2.48*sin(\t r)});
\begin{scriptsize}
\draw [fill=white] (6.,7.) circle (1.5pt);
\draw (5.8,6.8) node[anchor=north west,scale=1] {$v_n$};
\draw [fill=black] (8.,5.) circle (2.5pt);
\draw (7.3,5.2) node[anchor=north west,scale=1.5] {$v_i$};

\draw [fill=black] (7.723868430315539,3.9859597468732124) circle (2.5pt);
\draw (4.1,5.3) node[anchor=north west,scale=1.5] {$v_k$};
\draw (4.3,6.2) node[anchor=north west,scale=1.5] {$v_l$};
\draw (4.9,6.8) node[anchor=north west,scale=1.5] {$v_{m_i}$};
\draw (4.1,3) node[anchor=north west,scale=1.5] {$A_{v_k}$};
\draw (6.1,7.9) node[anchor=north west,scale=1.5] {$A_{v_i}$};

\draw (7.,4.4) node[anchor=north west,scale=1.5] {$v_j$};

\draw [fill=white] (7.0201530344336085,3.279741941935484) circle (1.5pt);
\draw (6.2,3.9) node[anchor=north west,scale=1.5] {$v_{m_k}$};
\draw [fill=green] (4.,5.) circle (2.5pt);

\draw [fill=red] (4.257384112780107,5.981473315790515) circle (2.5pt);
\draw [fill=white] (4.971008489144947,6.714985851425089) circle (1.5pt);
\draw [fill=white] (13.,7.) circle (1.5pt);
\draw [fill=black] (15.,5.) circle (2.5pt);
\draw [fill=white] (14.329997221961367,6.493689187741227) circle (1.5pt);
\draw [fill=red] (13.742781352708207,6.856953381770519) circle (2.5pt);
\draw [fill=green] (11.,5.) circle (2.5pt);
\draw [fill=white] (13.972971344926611,3.2526228907440444) circle (1.5pt);
\draw [fill=black] (14.734027887970035,4.003432248293083) circle (2.5pt);
\draw (14.3,5.3) node[anchor=north west,scale=1.5] {$v_i$};
\draw (11,5.3) node[anchor=north west,scale=1.5] {$v_k$};
\draw (13.9,4.4) node[anchor=north west,scale=1.5] {$v_j$};
\draw (13.2,3.9) node[anchor=north west,scale=1.5] {$v_{m_k}$};
\draw (13.3,7.47) node[anchor=north west,scale=1.5] {$v_l$};
\draw (4.1,5.3) node[anchor=north west,scale=1.5] {$v_k$};
\draw (13.5,6.6) node[anchor=north west,scale=1.5] {$v_{m_i}$};
\draw (12.7,7.5) node[anchor=north west,scale=1.] {$v_n$};
\draw (10.8,3.) node[anchor=north west,scale=1.5] {$A_{v_k}$};
\draw (15.25,6) node[anchor=north west,scale=1.5] {$A_{v_i}$};
\draw (4.5,2.2) node[anchor=north west,scale=1.2] {Figure $13(iii)$: };
\draw(2.5,1.8)node[anchor=north west,scale=1.2]{$v_i\in X_{\alpha}$, $v_k\in X_\beta$, and $v_j\in X_{\alpha}$, $v_l\in X_\gamma$};
\draw (5.5,1.4) node[anchor=north west,scale=1.2] { $(k<l\leq n)$};

\draw (12.5,2.2) node[anchor=north west,scale=1.2] {Figure $13(iv)$:}; 
\draw(10.2,1.8)node[anchor=north west,scale=1.2]{$v_i\in X_{\alpha}$, $v_k\in X_\beta$, and $v_j\in X_{\alpha}$, $v_l\in X_\gamma$};

\draw (13.5,1.4) node[anchor=north west,scale=1.2] {$(1\leq l<i)$.};

\end{scriptsize}
\end{tikzpicture}
    \label{fig: enter-label-9}
\end{figure}

\begin{figure}[H]
    \centering
    \begin{tikzpicture}[line cap=round,line join=round,x=1.0cm,y=1.0cm,scale=.9]
\clip(2,.5) rectangle (18.,8.);
\draw [line width=.5pt] (6.,5.) circle (2.cm);
\draw [shift={(6.,5.)},line width=.5pt]  plot[domain=-0.008333140440135445:2.14717154738608,variable=\t]({1.*2.4000833318866244*cos(\t r)+0.*2.4000833318866244*sin(\t r)},{0.*2.4000833318866244*cos(\t r)+1.*2.4000833318866244*sin(\t r)});
\draw [shift={(6.,5.)},line width=.5pt]  plot[domain=3.141592653589793:5.255715449211019,variable=\t]({1.*2.48*cos(\t r)+0.*2.48*sin(\t r)},{0.*2.48*cos(\t r)+1.*2.48*sin(\t r)});
\draw [line width=.5pt] (13.,5.) circle (2.cm);
\draw [shift={(13.,5.)},line width=.5pt]  plot[domain=3.141592653589793:5.217058339757389,variable=\t]({1.*2.48*cos(\t r)+0.*2.48*sin(\t r)},{0.*2.48*cos(\t r)+1.*2.48*sin(\t r)});
\draw [shift={(13.,5.)},line width=.5pt]  plot[domain=0.:0.8258385310050181,variable=\t]({1.*2.48*cos(\t r)+0.*2.48*sin(\t r)},{0.*2.48*cos(\t r)+1.*2.48*sin(\t r)});
\begin{scriptsize}
\draw [fill=white] (6.,7.) circle (1.5pt);
\draw (5.8,6.8) node[anchor=north west,scale=1] {$v_n$};
\draw [fill=black] (8.,5.) circle (2.5pt);
\draw (7.3,5.2) node[anchor=north west,scale=1.5] {$v_i$};

\draw [fill=red] (7.723868430315539,3.9859597468732124) circle (2.5pt);
\draw (4.1,5.3) node[anchor=north west,scale=1.5] {$v_k$};
\draw (4.3,6.2) node[anchor=north west,scale=1.5] {$v_l$};
\draw (4.9,6.8) node[anchor=north west,scale=1.5] {$v_{m_i}$};
\draw (4.1,3) node[anchor=north west,scale=1.5] {$A_{v_k}$};
\draw (6.1,7.9) node[anchor=north west,scale=1.5] {$A_{v_i}$};

\draw (7.,4.4) node[anchor=north west,scale=1.5] {$v_j$};

\draw [fill=white] (7.0201530344336085,3.279741941935484) circle (1.5pt);
\draw (6.2,3.9) node[anchor=north west,scale=1.5] {$v_{m_k}$};
\draw [fill=green] (4.,5.) circle (2.5pt);

\draw [fill=green] (4.257384112780107,5.981473315790515) circle (2.5pt);
\draw [fill=white] (4.971008489144947,6.714985851425089) circle (1.5pt);
\draw [fill=white] (13.,7.) circle (1.5pt);
\draw [fill=black] (15.,5.) circle (2.5pt);
\draw [fill=white] (14.329997221961367,6.493689187741227) circle (1.5pt);
\draw [fill=green] (13.742781352708207,6.856953381770519) circle (2.5pt);
\draw [fill=green] (11.,5.) circle (2.5pt);
\draw [fill=white] (13.972971344926611,3.2526228907440444) circle (1.5pt);
\draw [fill=red] (14.734027887970035,4.003432248293083) circle (2.5pt);
\draw (14.3,5.3) node[anchor=north west,scale=1.5] {$v_i$};
\draw (11,5.3) node[anchor=north west,scale=1.5] {$v_k$};
\draw (13.9,4.4) node[anchor=north west,scale=1.5] {$v_j$};
\draw (13.2,3.9) node[anchor=north west,scale=1.5] {$v_{m_k}$};
\draw (13.3,7.47) node[anchor=north west,scale=1.5] {$v_l$};
\draw (4.1,5.3) node[anchor=north west,scale=1.5] {$v_k$};
\draw (13.5,6.6) node[anchor=north west,scale=1.5] {$v_{m_i}$};
\draw (12.7,7.5) node[anchor=north west,scale=1.] {$v_n$};
\draw (10.8,3.) node[anchor=north west,scale=1.5] {$A_{v_k}$};
\draw (15.25,6) node[anchor=north west,scale=1.5] {$A_{v_i}$};
\draw (4.5,2.2) node[anchor=north west,scale=1.2] {Figure $13(v)$: };
\draw(2.5,1.8)node[anchor=north west,scale=1.2]{$v_i\in X_{\alpha}$, $v_k\in X_\beta$, and $v_j\in X_{\gamma}$, $v_l\in X_\beta$};
\draw (5.5,1.4) node[anchor=north west,scale=1.2] { $(k<l\leq n)$};

\draw (12.5,2.2) node[anchor=north west,scale=1.2] {Figure $13(vi)$:}; 
\draw(10.2,1.8)node[anchor=north west,scale=1.2]{$v_i\in X_{\alpha}$, $v_k\in X_\beta$, and $v_j\in X_{\gamma}$, $v_l\in X_\beta$};

\draw (13.5,1.4) node[anchor=north west,scale=1.2] {$(1\leq l<i)$.};

\end{scriptsize}
\end{tikzpicture}
    \label{fig: enter-label-9}
\end{figure}

\begin{figure}[H]
	\centering
	\begin{tikzpicture}[line cap=round,line join=round,x=1.0cm,y=1.0cm,scale=.9]
		\clip(2,.5) rectangle (18.,8.);
		\draw [line width=.5pt] (6.,5.) circle (2.cm);
		\draw [shift={(6.,5.)},line width=.5pt]  plot[domain=-0.008333140440135445:2.14717154738608,variable=\t]({1.*2.4000833318866244*cos(\t r)+0.*2.4000833318866244*sin(\t r)},{0.*2.4000833318866244*cos(\t r)+1.*2.4000833318866244*sin(\t r)});
		\draw [shift={(6.,5.)},line width=.5pt]  plot[domain=3.141592653589793:5.255715449211019,variable=\t]({1.*2.48*cos(\t r)+0.*2.48*sin(\t r)},{0.*2.48*cos(\t r)+1.*2.48*sin(\t r)});
		\draw [line width=.5pt] (13.,5.) circle (2.cm);
		\draw [shift={(13.,5.)},line width=.5pt]  plot[domain=3.141592653589793:5.217058339757389,variable=\t]({1.*2.48*cos(\t r)+0.*2.48*sin(\t r)},{0.*2.48*cos(\t r)+1.*2.48*sin(\t r)});
		\draw [shift={(13.,5.)},line width=.5pt]  plot[domain=0.:0.8258385310050181,variable=\t]({1.*2.48*cos(\t r)+0.*2.48*sin(\t r)},{0.*2.48*cos(\t r)+1.*2.48*sin(\t r)});
		\begin{scriptsize}
			\draw [fill=white] (6.,7.) circle (1.5pt);
			\draw (5.8,6.8) node[anchor=north west,scale=1] {$v_n$};
			\draw [fill=black] (8.,5.) circle (2.5pt);
			\draw (7.3,5.2) node[anchor=north west,scale=1.5] {$v_i$};
			
			\draw [fill=black] (7.723868430315539,3.9859597468732124) circle (2.5pt);
			\draw (4.1,5.3) node[anchor=north west,scale=1.5] {$v_k$};
			\draw (4.3,6.2) node[anchor=north west,scale=1.5] {$v_l$};
			\draw (4.9,6.8) node[anchor=north west,scale=1.5] {$v_{m_i}$};
			\draw (4.1,3) node[anchor=north west,scale=1.5] {$A_{v_k}$};
			\draw (6.1,7.9) node[anchor=north west,scale=1.5] {$A_{v_i}$};
			
			\draw (7.,4.4) node[anchor=north west,scale=1.5] {$v_j$};
			
			\draw [fill=white] (7.0201530344336085,3.279741941935484) circle (1.5pt);
			\draw (6.2,3.9) node[anchor=north west,scale=1.5] {$v_{m_k}$};
			\draw [fill=green] (4.,5.) circle (2.5pt);
			
			\draw [fill=green] (4.257384112780107,5.981473315790515) circle (2.5pt);
			\draw [fill=white] (4.971008489144947,6.714985851425089) circle (1.5pt);
			\draw [fill=white] (13.,7.) circle (1.5pt);
			\draw [fill=black] (15.,5.) circle (2.5pt);
			\draw [fill=white] (14.329997221961367,6.493689187741227) circle (1.5pt);
			\draw [fill=green] (13.742781352708207,6.856953381770519) circle (2.5pt);
			\draw [fill=green] (11.,5.) circle (2.5pt);
			\draw [fill=white] (13.972971344926611,3.2526228907440444) circle (1.5pt);
			\draw [fill=black] (14.734027887970035,4.003432248293083) circle (2.5pt);
			\draw (14.3,5.3) node[anchor=north west,scale=1.5] {$v_i$};
			\draw (11,5.3) node[anchor=north west,scale=1.5] {$v_k$};
			\draw (13.9,4.4) node[anchor=north west,scale=1.5] {$v_j$};
			\draw (13.2,3.9) node[anchor=north west,scale=1.5] {$v_{m_k}$};
			\draw (13.3,7.47) node[anchor=north west,scale=1.5] {$v_l$};
			\draw (4.1,5.3) node[anchor=north west,scale=1.5] {$v_k$};
			\draw (13.5,6.6) node[anchor=north west,scale=1.5] {$v_{m_i}$};
			\draw (12.7,7.5) node[anchor=north west,scale=1.] {$v_n$};
			\draw (10.8,3.) node[anchor=north west,scale=1.5] {$A_{v_k}$};
			\draw (15.25,6) node[anchor=north west,scale=1.5] {$A_{v_i}$};
			\draw (4.5,2.2) node[anchor=north west,scale=1.2] {Figure $13(vii)$: };
			\draw(2.5,1.8)node[anchor=north west,scale=1.2]{$v_i\in X_{\alpha}$, $v_k\in X_\beta$, and $v_j\in X_{\alpha}$, $v_l\in X_\beta$};
			\draw (5.5,1.4) node[anchor=north west,scale=1.2] { $(k<l\leq n)$};
			
			\draw (12.5,2.2) node[anchor=north west,scale=1.2] {Figure $13(viii)$:}; 
			\draw(10.2,1.8)node[anchor=north west,scale=1.2]{$v_i\in X_{\alpha}$, $v_k\in X_\beta$, and $v_j\in X_{\alpha}$, $v_l\in X_\beta$};
			
			\draw (13.5,1.4) node[anchor=north west,scale=1.2] {$(1\leq l<i)$.};

		\end{scriptsize}
	\end{tikzpicture}
	\label{fig: enter-label-9}
\end{figure}

Consider the following cases:  \\

\indent	 \textbf{Case 1:} $k < l$, with $v_i \in X_\alpha$, $v_k \in X_\beta$, and $v_j, v_l \in X_\gamma$ (Figure~13(i)).  
	The vertices $v_i, v_j, v_k, v_l$ (with $i < j < k < l$) form one of the configurations (i)--(iv) of Figure~7, depending on whether $v_iv_j$ and $v_kv_l$ belong to $E$. Each possibility leads to a contradiction.\\  
	
 \textbf{Case 2:} $l < i$, with $v_i \in X_\alpha$, $v_k \in X_\beta$, and $v_j, v_l \in X_\gamma$ (Figure~13(ii)).  
	The vertices $v_l, v_i, v_j, v_k$ (with $l < i < j < k$) yield one of the configurations (v)--(viii) of Figure~7, depending on whether $v_iv_j$ and $v_kv_l$ belong to $E$. Relabeling $l, i, j, k$ as $i, j, k, l$ shows that this again reduces to one of these configurations. In all cases, a contradiction arises.\\  
	
 \textbf{Case 3:} $k < l$, with $v_i \in X_\alpha$, $v_k \in X_\beta$, $v_j \in X_\alpha$, and $v_l \in X_\gamma$ (Figure~13(iii)).  
	The vertices $v_i, v_j, v_k, v_l$ (with $i < j < k < l$) form one of the configurations (ix)--(xii), depending on whether $v_jv_l$ and $v_kv_l$ belong to $E$. Each case yields a contradiction. \\ 
	
	\textbf{Case 4:} $l < i$, with $v_i \in X_\alpha$, $v_k \in X_\beta$, $v_j \in X_\alpha$, and $v_l \in X_\gamma$ (Figure~13(iv)).  
	The vertices $v_l, v_i, v_j, v_k$ (with $l < i < j < k$) form one of the configurations (xiii)--(xvi). After relabeling the indices, the same contradiction follows.\\  
	
 \textbf{Case 5:} $k < l$, with $v_i \in X_\alpha$, $v_k \in X_\beta$, $v_j \in X_\gamma$, and $v_l \in X_\beta$ (Figure~13(v)).  
	The vertices $v_i, v_j, v_k, v_l$ (with $i < j < k < l$) form one of the configurations (xxi)--(xxiv), depending on whether $v_iv_j$ and $v_jv_l$ belong to $E$. This yields a contradiction. \\ 
	
\textbf{Case 6:} $l < i$, with $v_i \in X_\alpha$, $v_k \in X_\beta$, $v_j \in X_\gamma$, and $v_l \in X_\beta$ (Figure~13(vi)).  
	The vertices $v_l, v_i, v_j, v_k$ (with $l < i < j < k$) form one of the configurations (xxv)--(xxviii) of Figure~7, depending on whether $v_iv_j$ and $v_jv_l$ belong to $E$. After relabeling $l, i, j, k$ as $i, j, k, l$, this again reduces to one of the configurations (xxv)--(xxviii). Thus, a contradiction follows.\\  
	
\textbf{Case 7:} $k < l$, with $v_i \in X_\alpha$, $v_k \in X_\beta$, $v_j \in X_\alpha$, and $v_l \in X_\beta$ (Figure~13(vii)).  
	The vertices $v_i, v_j, v_k, v_l$ (with $i < j < k < l$) form one of the configurations (xvii)--(xviii), depending on whether $v_jv_l \in E$. In either case, a contradiction follows.\\  
	
 \textbf{Case 8:} $l < i$, with $v_i \in X_\alpha$, $v_k \in X_\beta$, $v_j \in X_\alpha$, and $v_l \in X_\beta$ (Figure~13(viii)).  
	The vertices $v_l, v_i, v_j, v_k$ (with $l < i < j < k$) form one of the configurations (xix)--(xx), depending on whether $v_jv_l \in E$. After relabeling, this again yields a contradiction.  

\noindent
In every case, we are led to a contradiction. Therefore, if $v_iv_k \in E$, it must be that
\[
A_{v_i} \cap A_{v_k} \neq \varnothing.
\]
Equivalently,
\[
v_iv_k \in E \quad \Longleftrightarrow \quad A_{v_i} \cap A_{v_k} \neq \varnothing,
\]
and thus $B = (X_1, X_2, X_3, E)$ is a circular-arc $3$-graph.

\end{proof}
\begin{theo}
Let $B = (X_1, X_2, \ldots, X_r, E)$ be an $r$-partite graph ($r \geq 4$). Then the following 
statements are equivalent:
\begin{enumerate}
    \item $B$ is a circular-arc $r$-graph;
    \item The vertices of $B$ can be ordered $v_1, v_2, \ldots, v_n$ such that no indices 
    $i < j < k < \ell$ occur in the configurations shown in Figures~7 and~14. 
    (Here, different colors indicate that vertices of different colors belong to 
    different partite sets; moreover, all edges not explicitly drawn are absent.)
\end{enumerate}
\end{theo}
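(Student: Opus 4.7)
The plan is to proceed exactly in the template of the proof of the preceding $3$-partite theorem, reducing the $r$-partite case to cases already handled whenever the four witnesses $v_i, v_j, v_k, v_\ell$ meet at most three partite classes, and supplying Figure~14 to cover the genuinely new situation in which all four vertices lie in four pairwise distinct classes. Throughout I would retain the arc construction $A_{v_i} = [m_i, i]$, where $v_{m_i}$ is the last vertex outside $X_\alpha$ (with $v_i \in X_\alpha$) encountered in the anticlockwise walk from $v_i$ that is still adjacent to $v_i$, and orient the necessity ordering by clockwise endpoints of arcs.

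For necessity, suppose $B$ is a circular-arc $r$-graph and take the ordering by clockwise endpoints. Given indices $i<j<k<\ell$, I would split on $|\{\alpha_i,\alpha_j,\alpha_k,\alpha_\ell\}|$, where $\alpha_t$ denotes the class containing $v_t$. If this number is $2$, Theorem~4 of \cite{paul_das} already forbids the relevant configurations; if it is $3$, the preceding theorem handles all $28$ patterns of Figure~7; and if it is $4$, I would redo the four geometric case analyses of the $3$-partite proof (the two cases where the clockwise endpoint of $A_{v_i}$ lies in $A_{v_k}$ or vice versa when $v_iv_k \in E$, and symmetrically when $v_jv_\ell \in E$) and read off the forbidden adjacency patterns; the patterns obtained this way are precisely what I would collect as Figure~14.

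For sufficiency, I would again define the arcs by $A_{v_i}=[m_i,i]$ and check the biconditional $v_iv_k \in E \iff A_{v_i}\cap A_{v_k}\neq\varnothing$. The forward implication is immediate from the construction. For the converse, assuming $A_{v_i}\cap A_{v_k}=\varnothing$ while $v_iv_k\in E$ forces the existence of $v_j$ and $v_\ell$ witnessing the failure of the arc to reach as required, exactly as in Figures~13(i)--(viii). I would then organize the contradiction by cases according to (a) whether $k<\ell$ or $\ell<i$ in the cyclic order, and (b) the partite memberships of $v_j$ and $v_\ell$: each case either reproduces one of the eight subcases of the $3$-partite proof (when only three classes appear), or, when $v_j$ and $v_\ell$ belong to new classes, falls into one of the configurations of Figure~14, yielding the contradiction.

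The main obstacle will be the bookkeeping in Figure~14: ensuring that the list of forbidden patterns I emit from the geometric case analysis in necessity is exactly sufficient to complete every subcase of the sufficiency contradiction, with no pattern missing and no redundancy. Concretely, in the four-class situation I must track, for each choice of the edge realizing $A_{v_i}\cap A_{v_k}\neq\varnothing$ (equivalently, for each of Cases~1--8 when one of $v_j,v_\ell$ is in a fourth new class), whether each of the potential edges among $\{v_iv_j, v_jv_k, v_iv_\ell, v_jv_\ell, v_kv_\ell\}$ is present, and record the admissible combinations; the non-admissible ones become the entries of Figure~14. Once this list is generated and verified to match both halves of the argument, the rest of the proof reduces to indexing into the patterns of Figures~7 and~14 and concluding $A_{v_i}\cap A_{v_k}\neq\varnothing\iff v_iv_k\in E$, which gives the circular-arc $r$-graph representation.
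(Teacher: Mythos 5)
Your plan follows the paper's proof essentially verbatim: the same clockwise-endpoint ordering and split on how many partite classes the four witnesses meet (the paper folds your two- and three-class subcases into a single appeal to the $3$-partite theorem), the same arc construction $A_{v_i}=[m_i,i]$ for sufficiency, and the same contradiction via the witnesses $v_j,v_\ell$ reducing to Figure~7 when at most three classes appear and to Figure~14 when all four classes are distinct. The bookkeeping you flag as the main obstacle is exactly what the paper carries out in its Cases~1--4 of necessity and Figures~19(i)--(ii) of sufficiency (note that with four distinct classes only the two positional cases $k<\ell\leq n$ and $1\leq\ell<i$ remain, not eight), so the proposal is correct in outline and matches the paper's route.
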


\begin{figure}[H]
\centering

\begin{tikzpicture}[scale=1.2, every node/.style={circle, inner sep=2pt}]

\node (vi1) at (0, 0) [fill=green] {};
\node (vj1) at (1, 0) [fill=blue] {};
\node (vk1) at (2, 0) [fill=red] {};
\node (vl1) at (3, 0) [fill=black] {};

\draw (vi1) .. controls (1, .5) .. (vk1);

\node[draw=none, fill=none] at (0, -0.4) {$v_i$};
\node[draw=none, fill=none] at (1, -0.4) {$v_j$};
\node[draw=none, fill=none] at (2, -0.4) {$v_k$};
\node[draw=none, fill=none] at (3, -0.4) {$v_l$};

\node[draw=none, fill=none] at (1.5, -0.9) {(i)};

\node (vi2) at (5, 0) [fill=green] {};
\node (vj2) at (6, 0) [fill=blue] {};
\node (vk2) at (7, 0) [fill=red] {};
\node (vl2) at (8, 0) [fill=black] {};

\draw (vi2) .. controls (6., 0.5) .. (vk2);
\draw (vi2) .. controls (5, 0) .. (vj2);

\node[draw=none, fill=none] at (5, -0.4) {$v_i$};
\node[draw=none, fill=none] at (6, -0.4) {$v_j$};
\node[draw=none, fill=none] at (7, -0.4) {$v_k$};
\node[draw=none, fill=none] at (8, -0.4) {$v_l$};

\node[draw=none, fill=none] at (6.5, -0.9) {(ii)};

\node (vi3) at (10, 0) [fill=green] {};
\node (vj3) at (11, 0) [fill=blue] {};
\node (vk3) at (12, 0) [fill=red] {};
\node (vl3) at (13, 0) [fill=black] {};

\draw (vi3) .. controls (11, 0.5) .. (vk3);
\draw (vk3) .. controls (12, 0) .. (vl3);
\node[draw=none, fill=none] at (10, -.4) {$v_i$};
\node[draw=none, fill=none] at (11, -.4) {$v_j$};
\node[draw=none, fill=none] at (12, -.4) {$v_k$};
\node[draw=none, fill=none] at (13, -.4) {$v_l$};

\node[draw=none, fill=none] at (11.5, -0.9) {(iii)};

\node (vi4) at (0, -2) [fill=green] {};
\node (vj4) at (1, -2) [fill=blue] {};
\node (vk4) at (2, -2) [fill=red] {};
\node (vl4) at (3, -2) [fill=black] {};

\draw (vi4) .. controls (1, -1.5) .. (vk4);
\draw (vj4) .. controls (2, -1.5) .. (vl4);

\node[draw=none, fill=none] at (0, -2.4) {$v_i$};
\node[draw=none, fill=none] at (1, -2.4) {$v_j$};
\node[draw=none, fill=none] at (2, -2.4) {$v_k$};
\node[draw=none, fill=none] at (3, -2.4) {$v_l$};

\node[draw=none, fill=none] at (1.5, -2.9) {(iv)};

\node (vi5) at (5, -2) [fill=green] {};
\node (vj5) at (6, -2) [fill=blue] {};
\node (vk5) at (7, -2) [fill=red] {};
\node (vl5) at (8, -2) [fill=black] {};

\draw (vi5) .. controls (6, -1.5) .. (vk5);
\draw (vk5) .. controls (7, -2) .. (vl5);
\draw (vi5) .. controls (5, -2) .. (vj5);

\node[draw=none, fill=none] at (5, -2.4) {$v_i$};
\node[draw=none, fill=none] at (6, -2.4) {$v_j$};
\node[draw=none, fill=none] at (7, -2.4) {$v_k$};
\node[draw=none, fill=none] at (8, -2.4) {$v_l$};

\node[draw=none, fill=none] at (6.5, -2.9) {(v)};

\node (vi6) at (10, -2) [fill=green] {};
\node (vj6) at (11, -2) [fill=blue] {};
\node (vk6) at (12, -2) [fill=red] {};
\node (vl6) at (13, -2) [fill=black] {};

\draw (vi6) .. controls (11, -1.5) .. (vk6);
\draw (vj6) .. controls (12, -1.5) .. (vl6);
\draw (vi6) .. controls (11, -2) .. (vj6);

\node[draw=none, fill=none] at (10, -2.4) {$v_i$};
\node[draw=none, fill=none] at (11, -2.4) {$v_j$};
\node[draw=none, fill=none] at (12, -2.4) {$v_k$};
\node[draw=none, fill=none] at (13, -2.4) {$v_l$};

\node[draw=none, fill=none] at (11.5, -2.9) {(vi)};

\node (vi4) at (0, -4) [fill=green] {};
\node (vj4) at (1, -4) [fill=blue] {};
\node (vk4) at (2, -4) [fill=red] {};
\node (vl4) at (3, -4) [fill=black] {};

\draw (vi4) .. controls (1, -3.5) .. (vk4);
\draw (vj4) .. controls (2, -3.5) .. (vl4);
\draw (vk4) .. controls (2, -4) .. (vl4);

\node[draw=none, fill=none] at (0, -4.4) {$v_i$};
\node[draw=none, fill=none] at (1, -4.4) {$v_j$};
\node[draw=none, fill=none] at (2, -4.4) {$v_k$};
\node[draw=none, fill=none] at (3, -4.4) {$v_l$};

\node[draw=none, fill=none] at (1.5, -4.9) {(vii)};

\node (vi5) at (5, -4) [fill=green] {};
\node (vj5) at (6, -4) [fill=blue] {};
\node (vk5) at (7, -4) [fill=red] {};
\node (vl5) at (8, -4) [fill=black] {};

\draw (vi5) .. controls (6, -3.5) .. (vk5);
\draw (vj5) .. controls (7, -3.5) .. (vl5);
\draw (vi5) .. controls (5, -4) .. (vj5);
\draw (vk5) .. controls (7, -4) .. (vl5);

\node[draw=none, fill=none] at (5, -4.4) {$v_i$};
\node[draw=none, fill=none] at (6, -4.4) {$v_j$};
\node[draw=none, fill=none] at (7, -4.4) {$v_k$};
\node[draw=none, fill=none] at (8, -4.4) {$v_l$};

\node[draw=none, fill=none] at (6.5, -4.9) {(viii)};

\node (vi6) at (10, -4) [fill=green] {};
\node (vj6) at (11, -4) [fill=blue] {};
\node (vk6) at (12, -4) [fill=red] {};
\node (vl6) at (13, -4) [fill=black] {};

\draw (vj6) .. controls (12, -3.5) .. (vl6);

\node[draw=none, fill=none] at (10, -4.4) {$v_i$};
\node[draw=none, fill=none] at (11, -4.4) {$v_j$};
\node[draw=none, fill=none] at (12, -4.4) {$v_k$};
\node[draw=none, fill=none] at (13, -4.4) {$v_l$};

\node[draw=none, fill=none] at (11.5, -4.9) {(ix)};

\node (vi4) at (0, -6) [fill=green] {};
\node (vj4) at (1, -6) [fill=blue] {};
\node (vk4) at (2, -6) [fill=red] {};
\node (vl4) at (3, -6) [fill=black] {};

\draw (vj4) .. controls (1, -6) .. (vk4);
\draw (vj4) .. controls (2, -5.5) .. (vl4);

\node[draw=none, fill=none] at (0, -6.4) {$v_i$};
\node[draw=none, fill=none] at (1, -6.4) {$v_j$};
\node[draw=none, fill=none] at (2, -6.4) {$v_k$};
\node[draw=none, fill=none] at (3, -6.4) {$v_l$};

\node[draw=none, fill=none] at (1.5, -6.9) {(x)};

\node (vi5) at (5, -6) [fill=green] {};
\node (vj5) at (6, -6) [fill=blue] {};
\node (vk5) at (7, -6) [fill=red] {};
\node (vl5) at (8, -6) [fill=black] {};

\draw (vi5) .. controls (6.5, -5.5) .. (vl5);
\draw (vj5) .. controls (7, -5.3) .. (vl5);

\node[draw=none, fill=none] at (5, -6.4) {$v_i$};
\node[draw=none, fill=none] at (6, -6.4) {$v_j$};
\node[draw=none, fill=none] at (7, -6.4) {$v_k$};
\node[draw=none, fill=none] at (8, -6.4) {$v_l$};

\node[draw=none, fill=none] at (6.5, -6.9) {(xi)};
\node (vi5) at (10, -6) [fill=green] {};
\node (vj5) at (11, -6) [fill=blue] {};
\node (vk5) at (12, -6) [fill=red] {};
\node (vl5) at (13, -6) [fill=black] {};

\draw (vi5) .. controls (11.5, -5.5) .. (vl5);
\draw (vj5) .. controls (12, -5.3) .. (vl5);
\draw (vj5) .. controls (11, -6) .. (vk5);

\node[draw=none, fill=none] at (10, -6.4) {$v_i$};
\node[draw=none, fill=none] at (11, -6.4) {$v_j$};
\node[draw=none, fill=none] at (12, -6.4) {$v_k$};
\node[draw=none, fill=none] at (13, -6.4) {$v_l$};

\node[draw=none, fill=none] at (11.5, -6.9) {(xii)};

\node (vi4) at (0, -8) [fill=green] {};
\node (vj4) at (1, -8) [fill=blue] {};
\node (vk4) at (2, -8) [fill=red] {};
\node (vl4) at (3, -8) [fill=black] {};

\draw (vi4) .. controls (1, -7.5) .. (vk4);
\draw (vj4) .. controls (2, -7.5) .. (vl4);
\draw (vj4) .. controls (1, -8) .. (vk4);

\node[draw=none, fill=none] at (0, -8.4) {$v_i$};
\node[draw=none, fill=none] at (1, -8.4) {$v_j$};
\node[draw=none, fill=none] at (2, -8.4) {$v_k$};
\node[draw=none, fill=none] at (3, -8.4) {$v_l$};

\node[draw=none, fill=none] at (1.5, -8.9) {(xiii)};

\node (vi4) at (5, -8) [fill=green] {};
\node (vj4) at (6, -8) [fill=blue] {};
\node (vk4) at (7, -8) [fill=red] {};
\node (vl4) at (8, -8) [fill=black] {};

\draw (vi4) .. controls (6.5, -7.2) .. (vl4);
\draw (vj4) .. controls (7, -7.6) .. (vl4);
\draw (vi4) .. controls (6, -7.6) .. (vk4);

\node[draw=none, fill=none] at (5, -8.4) {$v_i$};
\node[draw=none, fill=none] at (6, -8.4) {$v_j$};
\node[draw=none, fill=none] at (7, -8.4) {$v_k$};
\node[draw=none, fill=none] at (8, -8.4) {$v_l$};

\node[draw=none, fill=none] at (6.5, -8.9) {(xiv)};
\node (vi4) at (10, -8) [fill=green] {};
\node (vj4) at (11, -8) [fill=blue] {};
\node (vk4) at (12, -8) [fill=red] {};
\node (vl4) at (13, -8) [fill=black] {};

\draw (vi4) .. controls (11.5, -7.2) .. (vl4);
\draw (vj4) .. controls (12, -7.6) .. (vl4);
\draw (vi4) .. controls (11, -7.6) .. (vk4);
\draw (vj4) .. controls (11, -8) .. (vk4);
\node[draw=none, fill=none] at (10, -8.4) {$v_i$};
\node[draw=none, fill=none] at (11, -8.4) {$v_j$};
\node[draw=none, fill=none] at (12, -8.4) {$v_k$};
\node[draw=none, fill=none] at (13, -8.4) {$v_l$};

\node[draw=none, fill=none] at (11.5, -8.9) {(xv)};

	\draw (2,-6.5) node[anchor=north west,scale=1.] {Figure 14: Some forbidden patterns for circular-arc $r$-graphs ($r\geq 4)$.};
\end{tikzpicture}
\end{figure}

\begin{proof}
Necessity:Let $B = (X_1, X_2, \dots, X_r, E)$ be a circular-arc $r$-graph with $n$ vertices.  
There exists a family $\mathcal{A} = \{A_v : v \in \bigcup_{i=1}^{r} X_i\}$
of circular arcs on a host circle such that $uv \in E$ if and only if 
$A_u \cap A_v \neq \varnothing$, where $u$ and $v$ belong to different partite sets.  

Arrange the vertices of $B$ in increasing order of the clockwise endpoints of their 
corresponding circular arcs. Denote this ordering by $v_1, v_2, v_3, \ldots, v_n$. 
We now show that, under this ordering, the configurations illustrated in Figure~7 
and Figure~14 cannot occur.  

Consider four vertices $v_i, v_j, v_k, v_\ell$ such that $i < j < k < \ell$.  
If these vertices belong to at most three of the partite sets (but not four distinct 
partite sets), then by Theorem~5 it follows immediately that the configurations in 
Figure~7 cannot occur.  

Now suppose that $v_i, v_j, v_k, v_\ell$ belong to four different partite sets. 
For clarity, we use four distinct colors---black, red, green, and blue---to represent 
vertices from different partite sets. Assume that $v_i v_k \in E$ 
(i.e., $A_{v_i} \cap A_{v_k} \neq \varnothing$), where $v_i$ is colored green 
and $v_k$ is colored red. This leads to the following two possible cases:
\\
\textbf{Case 1.}
\begin{figure}[H]
	\centering
	\begin{tikzpicture}[line cap=round,line join=round,x=1.0cm,y=1.0cm,scale=.9]
		\clip(4.,2.5) rectangle (15.,9.);
		\draw [line width=.5pt] (9.,6.) circle (2.cm);
		\draw [shift={(9.,6.)},line width=.5pt]  plot[domain=-2.7367008673047097:0.3876695783739986,variable=\t]({1.*2.5893628559937287*cos(\t r)+0.*2.5893628559937287*sin(\t r)},{0.*2.5893628559937287*cos(\t r)+1.*2.5893628559937287*sin(\t r)});
		\draw [shift={(9.,6.)},line width=.5pt]  plot[domain=0.32175055439664224:0.8204714939306736,variable=\t]({1.*3.1622776601683795*cos(\t r)+0.*3.1622776601683795*sin(\t r)},{0.*3.1622776601683795*cos(\t r)+1.*3.1622776601683795*sin(\t r)});
		\begin{scriptsize}
			\draw [fill=white] (9.5,7.94) circle (1.pt);
			\draw (9.3,7.9) node[anchor=north west,scale=1] {$v_1$};
			
			\draw [fill=white] (9.,8.) circle (1.pt);
			\draw (8.8,8) node[anchor=north west,scale=1] {$v_n$};
			
			\draw [fill=white] (8.5,7.94) circle (1.pt);
			\draw (8.,7.9) node[anchor=north west,scale=1] {$v_{n-1}$};
			
			\draw [fill=green] (10.83,6.8) circle (2.5pt);
			\draw (10.1,7.2) node[anchor=north west,scale=1.5] {$v_i$};
			
			\draw [fill=blue] (10.005240914749061,4.270985626631615) circle (2.5pt);
			\draw (9.5,5) node[anchor=north west,scale=1.5] {$v_j$};
			
			\draw [fill=red] (7.17,5.2) circle (2.5pt);
			\draw (7.2,5.6) node[anchor=north west,scale=1.5] {$v_k$};
			
			\draw [fill=black] (7.283275503697917,7.026088204686302) circle (2.5pt);
			\draw (7.2,7.3) node[anchor=north west,scale=1.5] {$v_l$};
			
			\draw [fill=black] (11.32,7.1) circle (.5pt);
			\draw [fill=black] (11.1,7.36) circle (.5pt);
			\draw [fill=black] (10.86,7.62) circle (.5pt);
			\draw [fill=black] (10.3,8.8) circle (.5pt);
			\draw [fill=black] (10.6,8.7) circle (.5pt);
			\draw [fill=black] (10.91,8.5) circle (.5pt);
			\draw (8,4.2) node[anchor=north west,scale=1.5] {$A_{v_k}$};
			\draw (11.8,8) node[anchor=north west,scale=1.5] {$A_{v_i}$};
			
			\draw (4,3.) node[anchor=north west,scale=1.2] {Figure 15: Clockwise end point of $A_{v_i}$ lies in $A_{v_k}$.};
		\end{scriptsize}
	\end{tikzpicture}
	
\end{figure}
In this case, irrespective of whether $v_kv_l, v_iv_j, v_jv_l \in E$ or $v_kv_l, v_iv_j, v_jv_l \notin E$, the configurations (i)--(viii) of Figure~14 are avoided.
\\
   
\textbf{Case 2.}
\begin{figure}[H]
	\centering

	\begin{tikzpicture}[line cap=round,line join=round,x=1.0cm,y=1.0cm,scale=.9]
		\clip(3.,1.) rectangle (13.,8.);
		\draw [line width=.5pt] (8.,5.) circle (2.cm);
		\draw [shift={(8.,5.)},line width=.5pt]  plot[domain=0.3966104021074092:3.53219969728748,variable=\t]({1.*2.536927275268253*cos(\t r)+0.*2.536927275268253*sin(\t r)},{0.*2.536927275268253*cos(\t r)+1.*2.536927275268253*sin(\t r)});
		\draw [shift={(8.,5.)},line width=.5pt]  plot[domain=3.4972285378905528:4.71238898038469,variable=\t]({1.*2.986904752415115*cos(\t r)+0.*2.986904752415115*sin(\t r)},{0.*2.986904752415115*cos(\t r)+1.*2.986904752415115*sin(\t r)});
		\begin{scriptsize}
			\draw [fill=white] (8.,7.) circle (1pt);
			\draw [fill=white] (8.827605888602369,6.820732954925209) circle (1pt);
			\draw [fill=white] (7.189651391201308,6.828478912161151) circle (1pt);
			\draw [fill=green] (9.85,5.8) circle (2.5pt);
			\draw [fill=blue] (9.714985851425087,3.9710084891449458) circle (2.5pt);
			\draw [fill=red] (6.13,4.3) circle (2.5pt);
			\draw [fill=black] (6.269596358729952,6.002847564826958) circle (2.5pt);
			\draw [fill=black] (5.74,3.92) circle (.5pt);
			\draw [fill=black] (5.98,3.68) circle (.5pt);
			\draw [fill=black] (6.2,3.46) circle (.5pt);
			\draw [fill=black] (8.26,2.08) circle (.5pt);
			\draw [fill=black] (8.64,2.16) circle (.5pt);
			\draw [fill=black] (8.96,2.28) circle (.5pt);
			\draw (7.8,7) node[anchor=north west,scale=1] {$v_n$};
			\draw (7.,6.8) node[anchor=north west,scale=1] {$v_{n-1}$};
			\draw (8.4,6.8) node[anchor=north west,scale=1] {$v_1$};
			\draw (5,3) node[anchor=north west,scale=1.5] {$A_{v_k}$};
			\draw (8,8.1) node[anchor=north west,scale=1.5] {$A_{v_i}$};
			\draw (9,6.1) node[anchor=north west,scale=1.5] {$v_i$};
			\draw (9,4.3) node[anchor=north west,scale=1.5] {$v_j$};
			\draw (6.2,4.4) node[anchor=north west,scale=1.5] {$v_k$};
			\draw (6.2,6.1) node[anchor=north west,scale=1.5] {$v_l$};
				\draw (3,1.5) node[anchor=north west,scale=1.2] {Figure 16:Clockwise end point of $A_{v_k}$ lies in $A_{v_i}$.};
		\end{scriptsize}
	\end{tikzpicture}
\end{figure}  
Also, in this case, depending on whether the edges $v_kv_l$, $v_iv_j$, and $v_jv_l$ belong to $E$ or not, none of the configurations (i)--(viii) in Figure~14 can occur.\\[6pt]
Now, if $v_jv_l \in E$ (that is, $A_{v_j} \cap A_{v_l} \neq \varnothing$), then we obtain the following cases:
\\
 \textbf{Case 3.}
\begin{figure}[H]
	\centering
	\begin{tikzpicture}[line cap=round,line join=round,x=1.0cm,y=1.0cm,scale=.8]
		\clip(0,1) rectangle (11,8);
		\draw(6,5) circle (2.24cm);
		\draw [shift={(6,5)}] plot[domain=2.94:5.35,variable=\t]({1*2.88*cos(\t r)+0*2.88*sin(\t r)},{0*2.88*cos(\t r)+1*2.88*sin(\t r)});
		\draw [shift={(6,5)}] plot[domain=-0.94:0.08,variable=\t]({1*2.58*cos(\t r)+0*2.58*sin(\t r)},{0*2.58*cos(\t r)+1*2.58*sin(\t r)});
		\begin{scriptsize}
			\draw [fill=green] (8,6) circle (2.5pt);
			\fill [color=blue] (7.28,3.16) circle (2.5pt);
			\draw[color=black] (7.1,3.4) node {$v_j$};
			\draw[color=black] (7.6,6.) node {$v_i$};
			\draw [fill=red] (4.45,3.39) circle (2.5pt);
			\draw[color=black] (4.6,3.66) node {$v_k$};
			\fill [color=black] (3.83,5.56) circle (2.5pt);
			\draw[color=black] (4.2,5.7) node {$v_l$};
			\fill [color=black] (7.88,2.8) circle (1.pt);
			\fill [color=black] (8.08,3) circle (1.pt);
			\fill [color=black] (8.22,3.24) circle (1.pt);
			\fill [color=black] (8.54,5.38) circle (1.pt);
			\fill [color=black] (8.48,5.66) circle (1.pt);
			\fill [color=black] (8.36,5.98) circle (1pt);
			\draw [fill=white] (6,7.24) circle (1.5pt);
			\draw[color=black] (6,7.) node {$v_n$};
			
			\draw[fill=white] (5,7) circle (1.5pt);
			\draw[color=black] (5.1,6.7) node {$v_{n-1}$};
			
			\draw [fill=white] (7,7) circle (1.5pt);
			\draw[color=black] (6.9,6.7) node {$v_{n+1}$};
			\draw[color=black] (3.5,3.) node[anchor=north west,scale=1.5] {$A_{v_l}$};
			\draw[color=black] (8.2,3.8) node[anchor=north west,scale=1.5] {$A_{v_j}$};
			\draw (0,1.6) node[anchor=north west,scale=1.2] {Figure 17: Clockwise end point of $A_{v_j}$ lies in $A_{v_l}$.};
		\end{scriptsize}
	\end{tikzpicture}
\end{figure}
In this case, depending on whether the edges $v_jv_k$, $v_iv_l$, and $v_iv_k$ belong to $E$ or not, none of the configurations (iv) and (ix)--(xv) in Figure~14 can occur.
\\

\textbf{Case 4.}

\begin{figure}[H]
	\centering

	\begin{tikzpicture}[line cap=round,line join=round,x=1.0cm,y=1.0cm,scale=.8]
		\clip(0,1) rectangle (12,9);
		\draw(6,5) circle (2.24cm);
		\draw [shift={(6,5)}] plot[domain=2.91:3.6,variable=\t]({1*2.69*cos(\t r)+0*2.69*sin(\t r)},{0*2.69*cos(\t r)+1*2.69*sin(\t r)});
		\draw [shift={(6,5)}] plot[domain=-0.94:2.93,variable=\t]({1*2.99*cos(\t r)+0*2.99*sin(\t r)},{0*2.99*cos(\t r)+1*2.99*sin(\t r)});
		\begin{scriptsize}
			\fill [color=green] (8,6) circle (2.5pt);
			\fill [color=blue] (7.28,3.16) circle (2.5pt);
			\draw[color=black] (7.1,3.4) node {$v_j$};
			\draw[color=black] (7.6,6.) node {$v_i$};
			
			\fill [color=red] (4.45,3.39) circle (2.5pt);
			\draw[color=black] (4.6,3.66) node {$v_k$};
			\fill [color=black] (3.83,5.56) circle (2.5pt);
			\draw[color=black] (4.2,5.7) node {$v_l$};
			\fill [color=black] (3.06,5.5) circle (1.pt);
			\fill [color=black] (3.07,5.3) circle (1.pt);
			\fill [color=black] (3.1,5.1) circle (1.pt);
			\fill [color=black] (3.8,3.5) circle (1.pt);
			\fill [color=black] (3.96,3.3) circle (1.pt);
			\fill [color=black] (4.138,3.1) circle (1pt);
			\draw [fill=white] (6,7.24) circle (1.5pt);
			\draw[color=black] (6,7.) node {$v_n$};
			
			\draw[fill=white] (5,7) circle (1.5pt);
			\draw[color=black] (5.1,6.7) node {$v_{n-1}$};
			
			\draw [fill=white] (7,7) circle (1.5pt);
			\draw[color=black] (6.9,6.7) node {$v_{n+1}$};
			\draw[color=black] (2.1,5.) node[anchor=north west,scale=1.5] {$A_{v_l}$};
			\draw[color=black] (8.5,6.8) node[anchor=north west,scale=1.5]{$A_{v_j}$};
			\draw (0,1.6) node[anchor=north west,scale=1.2] {Figure 18: Clockwise end point of $A_{v_l}$ lies in $A_{v_j}$.};
		\end{scriptsize}
	\end{tikzpicture}
\end{figure}
Similar to Case~3, in this situation as well, depending on whether the edges $v_jv_k$, $v_iv_l$, and $v_iv_k$ belong to $E$ or not, none of the configurations (iv) and (ix)--(xv) in Figure~8 can occur.\\[6pt]
Therefore, if $B = (X_1, X_2, \dots, X_r, E)$ is a circular-arc $r$-graph, then there exists an ordering $v_1, v_2, \dots, v_n$ of the vertices of $B$ such that no indices $i < j < k < l$ give rise to any of the configurations shown in Figure~7 and Figure~14.

\par \textbf{Sufficiency:}  
Let us assume that the vertices of 
$B = (X_1, X_2, \dots, X_r, E)$
can be ordered as $v_1, v_2, \dots, v_n$ such that no four indices $i < j < k < l$ correspond to any of the forbidden configurations shown in Figure~7 and Figure~14.  

We now construct a family of circular arcs
$\mathcal{A} = \{ A_{v_i} : 1 \leq i \leq n \}$,
associated with the vertices of $B$.  

Suppose $v_i \in X_\alpha$ for some $\alpha \in \{1,2,\dots,r\}$. Define
$A_{v_i} = [m_i, i], \quad 1 \leq i \leq n$,
where $v_{m_i} \in X \setminus X_\alpha$ (with $X=\bigcup_{t=1}^{r} X_t$) is the last consecutive vertex (outside the partite set $X_\alpha$) that is adjacent to $v_i$ when traversing anticlockwise starting from $v_i$.  

It remains to show that
$A_{v_i} \cap A_{v_k} \neq \varnothing 
\quad \Longleftrightarrow \quad v_i v_k \in E$,
where $v_i$ and $v_k$ belong to different partite sets.
If $A_{v_i} \cap A_{v_k} \neq \varnothing$, then the intersection arises in one of the two possible ways (see Figures~12(i) and 12(ii)). In either case, by the construction of $A_{v_i}$ and $A_{v_k}$, it follows that $v_iv_k \in E$. Hence,
$A_{v_i} \cap A_{v_k} \neq \varnothing \;\; \Longrightarrow \;\; v_iv_k \in E$.\\
Now, suppose for the sake of contradiction that $v_iv_k \in E$ but $A_{v_i} \cap A_{v_k} = \varnothing$, where $v_i \in X_\alpha$, $v_k \in X_\beta$, and $\alpha \neq \beta \in \{1,2,\dots,r\}$.  
By the construction of $A_{v_i}$ and $A_{v_k}$, there must exist a vertex $v_j \notin X_\beta$ with $i < j < k$, such that $v_j$ is not adjacent to $v_k$.  
Additionally, there must exist another vertex $v_l \notin X_\alpha$, positioned between $v_k$ and $v_i$ in the clockwise order, which is not adjacent to $v_i$.  

If the four vertices $v_i, v_j, v_k, v_l$ together belong to at most three distinct partite sets, then by an argument similar to the proof of Theorem~5, one of the configurations in Figure~7 must occur, leading to a contradiction.  

Therefore, we may assume that the vertices $v_i, v_j, v_k, v_l$ come from four different partite sets. For clarity, we represent them with four distinct colors: green for $v_i$, blue for $v_j$, red for $v_k$, and black for $v_l$. Depending on the position of $v_l$, we obtain the following two figures:

 \begin{figure}[H]
	\centering
	\begin{tikzpicture}[line cap=round,line join=round,x=1.0cm,y=1.0cm,scale=.9]
		\clip(2,.5) rectangle (18.,8.);
		\draw [line width=.5pt] (6.,5.) circle (2.cm);
		\draw [shift={(6.,5.)},line width=.5pt]  plot[domain=-0.008333140440135445:2.14717154738608,variable=\t]({1.*2.4000833318866244*cos(\t r)+0.*2.4000833318866244*sin(\t r)},{0.*2.4000833318866244*cos(\t r)+1.*2.4000833318866244*sin(\t r)});
		\draw [shift={(6.,5.)},line width=.5pt]  plot[domain=3.141592653589793:5.255715449211019,variable=\t]({1.*2.48*cos(\t r)+0.*2.48*sin(\t r)},{0.*2.48*cos(\t r)+1.*2.48*sin(\t r)});
		\draw [line width=.5pt] (13.,5.) circle (2.cm);
		\draw [shift={(13.,5.)},line width=.5pt]  plot[domain=3.141592653589793:5.217058339757389,variable=\t]({1.*2.48*cos(\t r)+0.*2.48*sin(\t r)},{0.*2.48*cos(\t r)+1.*2.48*sin(\t r)});
		\draw [shift={(13.,5.)},line width=.5pt]  plot[domain=0.:0.8258385310050181,variable=\t]({1.*2.48*cos(\t r)+0.*2.48*sin(\t r)},{0.*2.48*cos(\t r)+1.*2.48*sin(\t r)});
		\begin{scriptsize}
			\draw [fill=white] (6.,7.) circle (1.5pt);
			\draw (5.8,6.8) node[anchor=north west,scale=1] {$v_n$};
			\draw [fill=green] (8.,5.) circle (2.5pt);
			\draw (7.3,5.2) node[anchor=north west,scale=1.5] {$v_i$};
			
			\draw [fill=blue] (7.723868430315539,3.9859597468732124) circle (2.5pt);
			\draw (4.1,5.3) node[anchor=north west,scale=1.5] {$v_k$};
			\draw (4.3,6.2) node[anchor=north west,scale=1.5] {$v_l$};
			\draw (4.9,6.8) node[anchor=north west,scale=1.5] {$v_{m_i}$};
			\draw (4.1,3) node[anchor=north west,scale=1.5] {$A_{v_k}$};
			\draw (6.1,7.9) node[anchor=north west,scale=1.5] {$A_{v_i}$};
			
			\draw (7.,4.4) node[anchor=north west,scale=1.5] {$v_j$};
			
			\draw [fill=white] (7.0201530344336085,3.279741941935484) circle (1.5pt);
			\draw (6.2,3.9) node[anchor=north west,scale=1.5] {$v_{m_k}$};
			\draw [fill=red] (4.,5.) circle (2.5pt);
			
			\draw [fill=black] (4.257384112780107,5.981473315790515) circle (2.5pt);
			\draw [fill=white] (4.971008489144947,6.714985851425089) circle (1.5pt);
			\draw [fill=white] (13.,7.) circle (1.5pt);
			\draw [fill=green] (15.,5.) circle (2.5pt);
			\draw [fill=white] (14.329997221961367,6.493689187741227) circle (1.5pt);
			\draw [fill=black] (13.742781352708207,6.856953381770519) circle (2.5pt);
			\draw [fill=red] (11.,5.) circle (2.5pt);
			\draw [fill=white] (13.972971344926611,3.2526228907440444) circle (1.5pt);
			\draw [fill=blue] (14.734027887970035,4.003432248293083) circle (2.5pt);
			\draw (14.3,5.3) node[anchor=north west,scale=1.5] {$v_i$};
			\draw (11,5.3) node[anchor=north west,scale=1.5] {$v_k$};
			\draw (13.9,4.4) node[anchor=north west,scale=1.5] {$v_j$};
			\draw (13.2,3.9) node[anchor=north west,scale=1.5] {$v_{m_k}$};
			\draw (13.3,7.47) node[anchor=north west,scale=1.5] {$v_l$};
			\draw (4.1,5.3) node[anchor=north west,scale=1.5] {$v_k$};
			\draw (13.5,6.6) node[anchor=north west,scale=1.5] {$v_{m_i}$};
			\draw (12.7,7.5) node[anchor=north west,scale=1.] {$v_n$};
			\draw (10.8,3.) node[anchor=north west,scale=1.5] {$A_{v_k}$};
			\draw (15.25,6) node[anchor=north west,scale=1.5] {$A_{v_i}$};
			\draw (3.5,2.2) node[anchor=north west,scale=1.2] {Figure $19(i)$: $(k<l\leq n).$ };
			
			\draw (11.,2.2) node[anchor=north west,scale=1.2] {Figure $19(ii)$: $(1\leq l<i)$.}; 
			

		\end{scriptsize}
	\end{tikzpicture}
	\label{fig: enter-label-9}
\end{figure}
From Figure~19(i), depending on whether the edges 
$v_iv_j, v_kv_l, v_jv_l$ belong to $E$ or not, we obtain one of the 
configurations (i)--(viii) from Figure~14, which contradicts our assumption. 
Similarly, from Figure~19(ii), depending on the presence or absence of the 
edges $v_iv_j, v_kv_l, v_jv_l$ in $E$, we obtain one of the configurations 
(iv) or (ix)--(xv) in Figure~14 (after relabeling $l,i,j,k$ as $i,j,k,l$).
\noindent
In every case, we are led to a contradiction. Therefore, if $v_iv_k \in E$, it must be that
\[
A_{v_i} \cap A_{v_k} \neq \varnothing.
\]
Equivalently,
\[
v_iv_k \in E \quad \Longleftrightarrow \quad A_{v_i} \cap A_{v_k} \neq \varnothing,
\]
and thus $B = (X_1, X_2,\dots, X_r, E)$ is a circular-arc $r$-graph. 
\end{proof}

\section{Conclusion}
The recognition algorithm for circular-arc graphs was established in linear time after extensive research \cite{dhh,McConnel}. More recently, Francis, Hell, and Stacho \cite{fhs} developed a certifying recognition algorithm for circular-arc graphs with running time $\mathcal{O}(n^3)$, based on forbidden structures of circular-arc graphs. However, the problem of designing efficient recognition algorithms for circular-arc $r$-graphs ($r\geq 2$) remains open. We hope that this paper serves as a motivating step toward resolving this problem.
\\


\end{document}